\newcommand{\lb} {\left}
\newcommand{\rb} {\right}
\newcommand{\nn} {\nonumber}
\newtheorem{theorem}{Theorem}
\newtheorem{corollary}{Corollary}
\newtheorem{remark}{Remark}
\begin{document}
\title{Opportunistic User Scheduling for Secure RIS-aided Wireless Communications}

\author{Burhan Wafai,~\IEEEmembership{Student Member,~IEEE}, Sarbani Ghose,~\IEEEmembership{Member,~IEEE}, Chinmoy Kundu,~\IEEEmembership{Member,~IEEE}, Ankit Dubey,~\IEEEmembership{Member,~IEEE}, 
and Mark F. Flanagan,~\IEEEmembership{Senior Member, IEEE}
\thanks{This work was supported in part by the Tata Consultancy Services (TCS) Foundation through its TCS Research Scholar Program,
the Science Foundation Ireland (SFI) under Grant Number 22/IRDIFA/10425,  and the Ministry of Electronics and Information Technology (MeitY), Govt. of India, through its project on capacity building for human resource development in unmanned aircraft systems (drone and related technology) with Ref. No. L14011/29/2021-HRD.
} 
\thanks{Burhan Wafai and Ankit Dubey are with the Department of Electrical Engineering, Indian Institute of Technology Jammu, Jammu  181221, India (email: burhan.wafai@iitjammu.ac.in and ankit.dubey@iitjammu.ac.in).}
\thanks{Sarbani Ghose is with DSZ Innovation Laboratories Private Limited, Kolkata, 700036, India (email: sarbani.ghose@dszlabs.com).}
\thanks{Chinmoy Kundu was with the School of Electrical and Electronic Engineering, University College Dublin, Dublin D04 V1W8, Ireland. He is now with the Wireless Communications Laboratory, Tyndall National Institute, University College Cork, Cork T12 R5CP, Ireland (email: chinmoy.kundu@tyndall.ie).}
\thanks{Mark F. Flanagan is with the School of Electrical and Electronic Engineering, University College Dublin, Dublin D04 V1W8, Ireland (email:mark.flanagan@ieee.org).}
}

\maketitle

\thispagestyle{empty}

\begin{abstract}
In this paper, we provide expressions for the secrecy outage probability (SOP) for suboptimal and optimal opportunistic scheduling schemes in a reconfigurable intelligent surface (RIS) aided {single antenna} system with multiple eavesdroppers in approximate closed form.
A suboptimal scheduling (SS) scheme is analyzed, which is used when the channel state information (CSI) of the eavesdropping links is unavailable, and the optimal scheduling (OS) scheme is also analyzed, which is used when the global CSI is available.
For each scheme, we provide a simplified expression for the SOP in the high signal-to-noise ratio (SNR) regime to demonstrate its behavior as a function of the key system parameters.
At high SNR, the SOP saturates to a constant level which decreases exponentially with the number of RIS elements in the SS scheme and with the product of the number of RIS elements and the number of users in the OS scheme.
{We also show that the derived SOP of the SS scheme can directly provide the SOP for the best antenna-user pair scheduling scheme in a multiple antenna system.}
We compare the performance of the opportunistic user scheduling schemes with that of a non-orthogonal multiple access (NOMA) based scheduling scheme which chooses a pair of users in each time slot for scheduling and we show that the opportunistic schemes outperform the NOMA-based scheme.
We also derive a closed-form expression for the SOP of a decode-and-forward (DF) relay-aided scheduling scheme in order to compare it with that of the RIS-aided system. It is found that the  RIS-aided system outperforms the relay-aided systems when the number of RIS elements is sufficiently large. An increased number of RIS elements is required to outperform the relay-aided system at higher operating frequencies.
\end{abstract}

\begin{IEEEkeywords} 
DF relaying, high-SNR analysis, NOMA, opportunistic user scheduling, RIS,  and secrecy outage probability. 
\end{IEEEkeywords}

\section{Introduction}
 A reconfigurable intelligent surface (RIS), consisting of a large number of passive reconfigurable elements (e.g., low-cost printed dipoles), is an emerging next-generation wireless communication technology that can induce a certain phase shift independently on the incident signal using smart electronic controllers.  Therefore, it can boost communication performance and coverage intelligently by shaping the propagation environment into a desired form \cite{Zhang_RIS_infty_CLT, Renzo_Tretyakov_RIS_state_of_research}. Due to the broadcast nature of wireless communications, RIS-aided systems are not immune to eavesdropping \cite{wyner_wiretap_channel, poor_Wireless_physical_layer_security, wireless_IT_security_bloch}. Towards securing RIS-aided communication, physical layer security (PLS) has gained much interest recently due to its low-complexity techniques based on physical properties of the radio channel (fading, interference, and path diversity, etc.) \cite{Renzo_Secrecy_Performance_Analysis, Ai_Ottersten_Secure_vehicular_communications,mumtaz_RIS_GC21,yadav_2021_ACTS,masoud_sop_smartgrid_TII,ghadi_2023_ris_fishersnedecor,wafai_gc,renzo_SOP_discrete_phase_TVT,Trigui_Zhu_Secrecy_Outage_Probability,Wang_Ni_Secrecy_performance_analysis, wei_shi_TWC24_sop}.

The average secrecy performance in the RIS-aided systems is analyzed using the two key performance metrics in the literature, i.e., secrecy outage probability (SOP) and ergodic secrecy rate (ESR).
The SOP provides the fraction of fading realizations for which the wireless channel can support a specific secure rate.
The SOP of an RIS-aided system with a single user and a single eavesdropper has been investigated in various studies \cite{Renzo_Secrecy_Performance_Analysis, Ai_Ottersten_Secure_vehicular_communications, mumtaz_RIS_GC21,yadav_2021_ACTS,masoud_sop_smartgrid_TII,ghadi_2023_ris_fishersnedecor,wafai_gc, renzo_SOP_discrete_phase_TVT}.
Notably, the direct link between the source and the user was taken into consideration in \cite{Renzo_Secrecy_Performance_Analysis} while the direct link was assumed to be blocked due to obstacles in \cite{Ai_Ottersten_Secure_vehicular_communications}. 
{In \cite{mumtaz_RIS_GC21}, the SOP in presence of distortion noise from residual hardware impairment was evaluated for the case where the direct link is assumed to be blocked.  The SOP, intercept probability, and probability of non-zero secrecy capacity were evaluated in \cite{yadav_2021_ACTS}, again considering the absence of the direct link.}  The ESR was evaluated along with the SOP in the presence of the direct link in \cite{masoud_sop_smartgrid_TII}. {The SOP and average secrecy capacity were evaluated over Fisher-Snedecor $\mathcal{F}$ fading channels in the absence of the direct link in \cite{ghadi_2023_ris_fishersnedecor}.}
The SOP of a jammer-supported RIS-aided system was derived in \cite{wafai_gc},  without considering the direct link. An optimal power allocation factor between the source and the jammer was obtained by minimizing the SOP.
An upper bound for the SOP was determined in \cite{renzo_SOP_discrete_phase_TVT} without the direct link under the constraint of discrete phase shifts at the RIS due to hardware limitations. 
The SOP without the direct link was obtained in the presence of multiple eavesdroppers in \cite{Trigui_Zhu_Secrecy_Outage_Probability, Wang_Ni_Secrecy_performance_analysis, wei_shi_TWC24_sop}. Specifically, non-colluding and colluding eavesdroppers with discrete phase shifts at the RIS were considered in \cite{Trigui_Zhu_Secrecy_Outage_Probability}.
An RIS-aided unmanned aerial vehicle system with eavesdroppers distributed according to the homogeneous two-dimensional Poisson point process (PPP) was investigated using stochastic geometry theory in \cite{Wang_Ni_Secrecy_performance_analysis}. Stochastic geometry theory is also applied in \cite{wei_shi_TWC24_sop} to obtain the SOP and the ESR of an RIS-aided multiple antenna communication system in the presence of random spatially distributed eavesdroppers.

The above articles consider evaluating the SOP of RIS-aided systems with a single user; however, future networks will be dense where many users will coexist along with multiple potential eavesdroppers.  With multiple users in an RIS-aided system, the SOP, ESR, and non-zero secrecy capacity were evaluated using stochastic geometry theory in \cite{jiayi_zhang_TIFS21_pls} for randomly located users in the presence of a multi-antenna eavesdropper.

In scenarios involving multiple users with uncorrelated fading exhibited by separate users, opportunistically selecting the best user that has the highest signal-to-noise ratio (SNR) among all users is a proven technique to improve the output SNR. In this user selection technique, the output SNR of the system is proportional to the individual link SNRs and the number of links \cite{jakes1994microwave}. This technique is the simplest and most inexpensive of all the diversity combining techniques. As a result, opportunistic user scheduling techniques are particularly useful in low-complexity next-generation wireless systems and have been extensively studied for improving secrecy in systems without the RIS \cite{ Bletsas_path_selection, kundu_dual_hop_regenerative, Telex_D2D_selection,   kundu_Ergodic_secrecy_rate_of_optimal,kotwalTVT}. Opportunistic user scheduling enhances secrecy by increasing the diversity in the legitimate channel.

In the context of secrecy, generally, two types of user scheduling techniques are present in the literature. When the channel state information (CSI) of the eavesdropping links is unavailable, a suboptimal scheduling (SS) scheme is implemented. When the CSI of the eavesdropping links is available, the optimal scheduling (OS) scheme is implemented.  The SS scheme schedules the user with a maximum source-to-user link rate, thereby not requiring the eavesdropping link CSI, whereas the OS scheme chooses the user with the maximum secrecy rate, which requires global CSI knowledge.

When opportunistic user scheduling is applied to RIS-aided systems, it allows one to achieve the highest achievable rate in a time slot for the scheduled user through simple RIS phase alignment toward that user without implementing computationally intensive RIS phase shift optimization methods. Opportunistic user scheduling also serves users equally on average in a rich scattering scenario.  Owing to the simplicity, opportunistic user scheduling was studied recently for an RIS-aided system in \cite{Zhong_perfana_usersel}, where the phases of the RIS elements were aligned toward the scheduled user. However, it did not study the secrecy performance of the system.  The phase alignment toward the scheduled user automatically ensures the misalignment of phases toward the eavesdroppers; hence, opportunistic user scheduling with RIS phase alignment toward the scheduled user can improve overall system security, and thus its SOP performance needs to be evaluated.

In contrast to opportunistic user scheduling, which is an orthogonal scheme, in a non-orthogonal multiple-access (NOMA) scheme aligning the phases of the RIS elements simultaneously towards each user is not possible. Furthermore,  joint optimization of the phases of the RIS elements and the power allocation for multiple users is also computationally intensive in the NOMA-based scheduling. 
The secrecy of an RIS-aided system  with NOMA scheduling has been studied in \cite{NOMA_secrecy_RIS_GC_21,yang_RIS_NOMA_SOP,ghadi2024physicallayersecurityperformance,yang_sop_noma_TVT23, pei_sop_noma_TVT23}. 
In a distributed RIS-aided NOMA network consisting of a source, multiple users, and an eavesdropper, the minimum secrecy rate among all of the users was maximized utilizing the SOP constraint derived for individual users in \cite{NOMA_secrecy_RIS_GC_21}. 
{The SOP of a multiple-RIS-aided NOMA system was obtained in \cite{yang_RIS_NOMA_SOP} while the SOP along with the average secrecy capacity and secrecy energy efficiency were evaluated in \cite{ghadi2024physicallayersecurityperformance}.
An RIS-aided covert communication system with a source, two users, and an eavesdropper was considered in \cite{yang_sop_noma_TVT23}, where the source employs NOMA and rate-splitting to communicate with the users. The approximate SOP of the system was derived with fixed power allocation.}
The secrecy of an RIS-aided NOMA network consisting of multiple users and a single eavesdropper was considered in \cite{pei_sop_noma_TVT23}. An approximate SOP  for individual users was evaluated with an on-off control RIS phase shift design to show the comparison between the NOMA  and orthogonal multiple access. It was concluded that the RIS-aided NOMA network has a better performance.
However, the above literature \cite{yang_sop_noma_TVT23,ghadi2024physicallayersecurityperformance,NOMA_secrecy_RIS_GC_21,yang_RIS_NOMA_SOP,pei_sop_noma_TVT23} did not investigate NOMA-based scheduling for secure communication.
Moreover, a comprehensive comparison between opportunistic user scheduling and NOMA-based scheduling is notably missing from the existing literature.

In the context of secure RIS-aided communication, it is important to determine when the RIS-aided system outperforms a relay-aided system to make deployment decisions between RIS and relay-aided systems. A comprehensive comparison between the RIS and relay-aided systems was presented in \cite{emil_RIS_relay_comparison, Ntontin_relay_ris, Renzo_ris_relay_ojcoms,   Alouini_ris_relay_ojcoms,madrid_DC_relay_v_ris}. 
A common observation in these articles is that the RIS requires a specific number of elements to outperform the corresponding relay-aided system except in \cite{madrid_DC_relay_v_ris}, where a mobile relay with multiple antennas generally outperformed the corresponding RIS-aided system due to the higher degrees of freedom in the mobility of the relay against the static RIS. 
However, whether similar observations hold in secure communication remains unknown, as a comparison between the RIS-aided system and the relay-aided system in this context is not available. 
In RIS-aided secure communication, a comparison of the SOP for an RIS and the relay-aided system was presented for a simple single-user single-eavesdropper case in \cite{Renzo_Secrecy_Performance_Analysis}. Though the RIS-aided system outperformed the relay-aided system for a fixed number of RIS elements, it was not shown how many RIS elements were required to outperform the relay-aided system.

Although opportunistic user scheduling is the simplest and most inexpensive multi-user diversity combining technique and can be easily integrated into RIS-aided systems through simple RIS phase alignment towards the best user, it has not yet received sufficient attention in RIS-aided systems for secrecy enhancement. Only recently, \cite{wcnc_2024} considered opportunistic user scheduling using the SS scheme in RIS-aided systems. The SOP was obtained in the presence of a single eavesdropper and compared with that of an RIS-aided NOMA-based scheduling scheme.
In this work, we generalize the analysis of \cite{wcnc_2024} to the case of SS \textit{and} OS schemes in the presence of \textit{multiple eavesdroppers}. We also provide more comprehensive results compared with \cite{wcnc_2024}, including a comparison with the NOMA-based system and the DF relay-aided system, which are necessary for making deployment decisions.
It is also worth mentioning that all of the aforementioned articles utilized a simple distance-dependent path loss model. However, a realistic path loss model should be incorporated into the RIS-aided system analysis to quantify the SOP performance accurately. 

Motivated by the above discussion, we consider an RIS-aided system consisting of a single source and multiple users in the presence of multiple eavesdroppers where each node is equipped with a single antenna. We analyze two opportunistic user scheduling schemes where the phases of the RIS are aligned toward the scheduled user to improve the secrecy of the RIS-aided system.
In this context, our contributions are summarized as follows:
\begin{itemize}
    \item A generalized SOP analysis for the SS and OS opportunistic user scheduling schemes is presented in approximate closed form that incorporates multiple users and multiple eavesdroppers {in the single antenna system. The analysis can also directly provide the SOP of the best antenna-user pair selection scheme in the multiple antenna system.}
   A realistic path loss model is adopted that considers frequency, distances, and angles of incidence and reflection at the RIS  following, \cite{nemanja_TWC} and \cite{Fadil_pathloss_ris_green_theorem}. 
    
   \item Simplified closed-form high-SNR expressions are provided for each scheduling scheme to demonstrate how the secrecy performance depends on the system parameters. This offers insights as well as guidelines for system design to determine the number of RIS elements required and where to place an RIS  to achieve a specific level of secrecy in the system. 
      
   \item It is demonstrated that in the high-SNR regime, the SOP saturates to a constant level. The saturation level decreases exponentially with the number of RIS elements and with the product of the number of RIS elements and users in the SS scheme and in the OS scheme, respectively. 

   \item The RIS-aided scheduling schemes are compared with a specific NOMA-based scheduling scheme, wherein a pair of NOMA users is scheduled in each time slot. The results show that the secrecy performance of the opportunistic scheduling schemes outperforms the NOMA-based scheduling scheme.
   
  \item The SOP performance of the RIS-aided scheduling schemes is compared with that of a relay-aided scheduling scheme in two scenarios: i)  when a direct link between the source and the users is unavailable and ii) when it is available. We show that the RIS-aided system outperforms the corresponding relay-aided system only when the RIS has a critical number of elements. This number also depends on the frequency of operation.
 \end{itemize}

\textit{Notation:}  
$\mathbb{P[\cdot]}$ denotes the probability of an event, $\mathbb{E}\left[\cdot \right]$ denotes the expectation operator and $\mathbb{V}\left[\cdot \right]$ denotes the variance operator.
The probability density function (PDF) and the cumulative distribution function (CDF) of a random variable $X$ are denoted by $f_{X}(\cdot)$ and $F_{X}(\cdot)$, respectively.  
$X\sim \mathcal{CN}(\mu,\sigma^2)$ denotes the distribution of a complex Gaussian random variable $X$ with mean $\mu$ and variance $\sigma^2$, and $\sim$ stands for ``distributed as''. $\max\{\cdot\}$ and
$\min\{\cdot\}$ denote the maximum and minimum of its arguments,
respectively.



\section{System Model}
\label{sec_System_Model}

\begin{figure}
\centering
\includegraphics[width=0.485\textwidth]{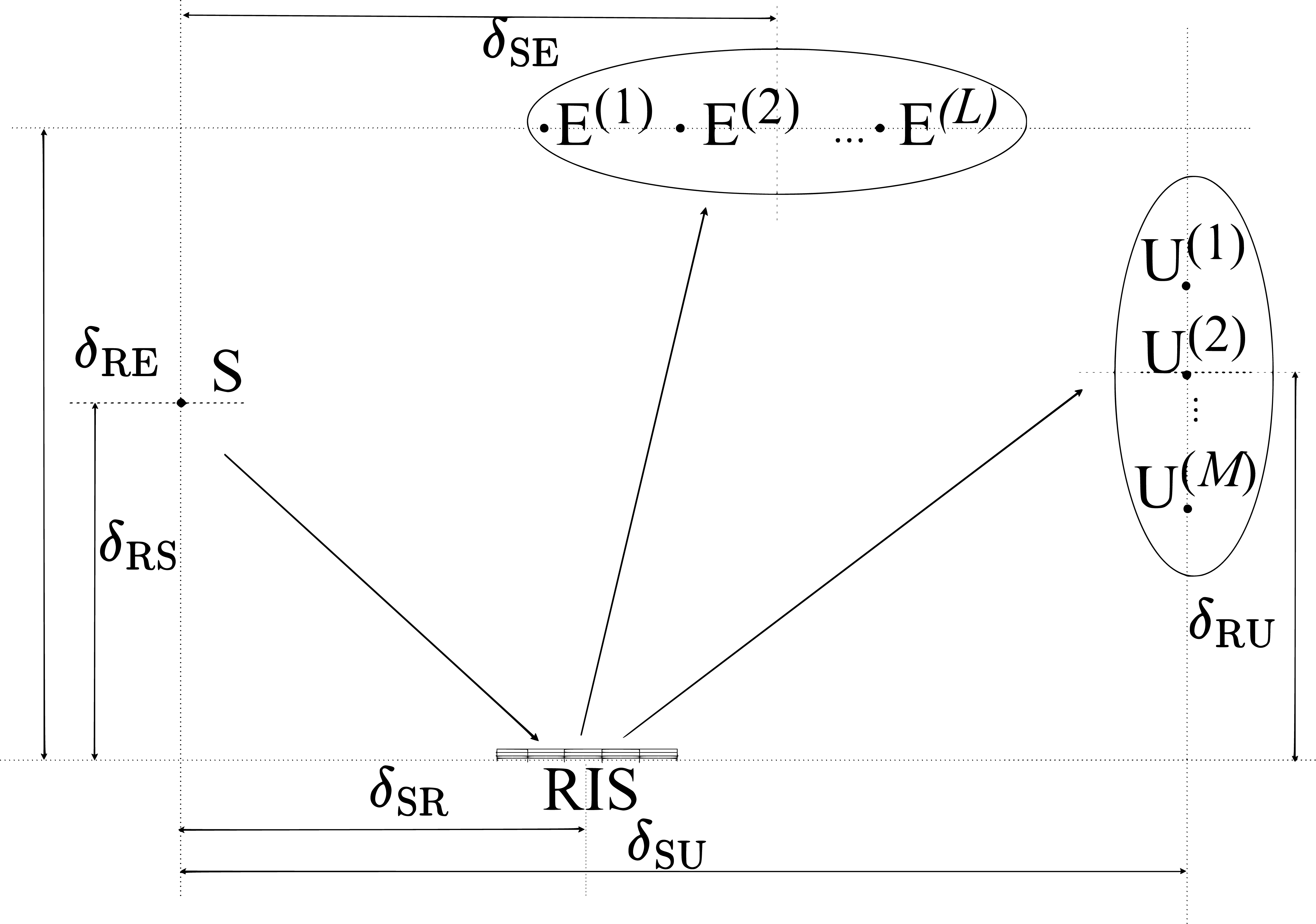}
\caption{An RIS-aided system with multiple users and multiple eavesdroppers.}
\label{FIG_1_systems_ariel}
\end{figure} 

We consider a wireless communication system where an RIS  with $N$ reflecting elements is providing an indirect communication link between a single-antenna source S and $M$ single-antenna users $\text{U}^{(m)}$ for $m\in\mathcal{M}=\{1,2,\ldots, M\}$, in presence of $L$  passive single-antenna eavesdroppers  $\text{E}^{(l)}$ for $l\in\mathcal{L}=\{1,2,\ldots, L\}$. 
{We assume that the direct links from the source to users and from the source to eavesdroppers are absent due to obstacles. An RIS is particularly useful for system performance improvement when the direct link is unavailable, and this use case is assumed commonly in the RIS literature (c.f., \cite{mumtaz_RIS_GC21,yadav_2021_ACTS,wafai_gc,Renzo_Secrecy_Performance_Analysis,masoud_sop_smartgrid_TII,Wang_Ni_Secrecy_performance_analysis,ghadi_2023_ris_fishersnedecor,Ai_Ottersten_Secure_vehicular_communications,renzo_SOP_discrete_phase_TVT, Trigui_Zhu_Secrecy_Outage_Probability,wei_shi_TWC24_sop,Zhang_RIS_infty_CLT,nemanja_TWC}).}
We assume that the nodes S, $\text{U}^{(m)}$ for $m\in\mathcal{M}$, $\text{E}^{(l)}$ for $l\in\mathcal{L}$, and the RIS are placed in different vertical planes as depicted in Fig. \ref{FIG_1_systems_ariel}. The plane containing the users is parallel to the plane containing S. The plane containing the RIS is perpendicular to the planes containing S and the users; and parallel to the plane containing the eavesdroppers.



The distance between the plane containing S and the center of the RIS is $\delta_{\text{SR}}$, the distance between the planes containing S and $\text{U}^{(m)}$ for any $m$ is $\delta_{\text{SU}}$, and the distance between the plane containing S and $\text{E}^{(l)}$ for any $l$ is $\delta_{\text{SE}}$. Similarly, we denote the distance between the plane containing the RIS and $\text{S}$ as $\delta_{\text{RS}}$, the distance between the plane containing the RIS and $\text{U}^{(m)}$ for any $m$ as $\delta_{\text{RU}}$, and the distance between the planes containing the RIS and $\text{E}^{(l)}$ for any $l$ as $\delta_{\text{RE}}$. The distances S-RIS, $\text{RIS-U}^{(m)}$ for any $m$,  and $\text{RIS-E}^{(l)}$ for any $l$, are denoted as $d_{\text{SR}}$, $d_{\text{RU}}$, and $d_{\text{RE}}$, respectively, and can be easily obtained from the geometry of the system.
Our analysis can be extended to the case of more general user positions where each user distance from the RIS is different. In such a scenario,  a fairness-aware scheduling method, such as proportional fair (PF) scheduling is commonly employed  \cite{PF_scheduling}. However, when equal user distances are considered from the RIS, the opportunistic user scheduling scheme proposed in this paper performs as effectively as the PF scheduling scheme. This special case we consider allows us to derive closed-from analytical equations from which we can derive useful insights. Indeed, the literature considering multi-user RIS-aided systems frequently adopts this assumption \cite{jiayi_zhang_TIFS21_pls,Zhong_perfana_usersel,wcnc_2024}.
\color{black}

We assume that node distances from the RIS are such that the far-field model is valid for signal transmission via the RIS \cite{nemanja_TWC,Fadil_pathloss_ris_green_theorem,Ellingson_RIS_pathloss}. All the elements of the RIS have the size of $\frac{\lambda}{2}\times\frac{\lambda}{2}$, where $\lambda$ is the wavelength of operation. The distance between the centers of adjacent RIS elements in both dimensions is $\frac{\lambda}{2}$.
{Typically, the non-negligible channel correlation among RIS reflecting elements is present when the electrical size of the reflecting elements is between $\frac{\lambda}{8}$ and $\frac{\lambda}{4}$ \cite{jiayi_zhang_TIFS21_pls}. If the electrical size of the RIS reflecting elements is larger than $\frac{\lambda}{4}$, channel correlation among RIS reflecting elements becomes weak. Thus, in this work we ignore the channel correlation among the RIS elements.}

The source signal reflected from the RIS is received by $M$ users as well as $L$  eavesdroppers. The phase contribution introduced by the $n$-th reflective element of the RIS is $\theta^{(n)}$, where $n\in\{1,2,\ldots,N\}$, and the corresponding reflective coefficient is  $\eta$ for each $n$. The free space path loss for the indirect link from S to $\text{U}^{(m)}$ for any $m$
via the RIS is written as  \cite{Fadil_pathloss_ris_green_theorem, nemanja_TWC}
\begin{align}\label{path_loss_angle}
\zeta_{\text{SU}} = \frac{\lambda^4}{256\pi^2}\frac{\big(\cos{\gamma^{(\text{inc})}}+\cos{\gamma^{(\text{ref})}_{\text{U}}}\big)^2}{d_{\text{SR}}^2d_{\text{RU}}^2},
\end{align}
where $\gamma^{(\text{inc})}$ represents the smallest angle formed by the vector normal to the RIS and the wavefront of the electromagnetic wave that originates from S,
and $\gamma^{(\text{ref})}_{\text{U}}$ represents the smallest angle formed by the vector normal to the RIS and the wavefront of the electromagnetic wave that is reflected from the RIS and observed at users.
Here we note that $d_{\text{SR}}=\sqrt{\delta_{\text{SR}}^2+\delta_{\text{RS}}^2}$ and
$d_{\text{RU}}=\sqrt{(\delta_{\text{SU}}-\delta_{\text{SR}})^2+\delta_{\text{RU}}^2}$. Similarly, the free space path loss for the indirect link between S and  $\text{E}^{(l)}$ for any $l$ can also be obtained.

The small scale fading between S and the $n$-th RIS element,  the $n$-th RIS element and $\text{U}^{(m)}$, and  the $n$-th RIS element and $\text{E}^{(l)}$ are denoted as $h_{\text{SR}}^{(n)}$,  $h_{\text{RU}}^{(n,m)}$,  and $h_{\text{RE}}^{(n,l)}$, respectively.   We assume that $h_{\text{SR}}^{(n)}$,  $h_{\text{RU}}^{(n,m)}$,  and $h_{\text{RE}}^{(n,l)}$ for any $n$, $m$, and $l$ are independent complex Gaussian random variables (RVs) with zero mean and unit variance following related channel models in \cite{Renzo_Secrecy_Performance_Analysis,wafai_gc, Trigui_Zhu_Secrecy_Outage_Probability,masoud_sop_smartgrid_TII,Wang_Ni_Secrecy_performance_analysis,Ai_Ottersten_Secure_vehicular_communications}. \footnote{{
The SOP analysis methodology used in this paper can be extended to other possible channel models, such as Rician fading.}}
The received signals at $\text{U}^{(m)}$  and $\text{E}^{(l)}$ 
via the $N$ RIS reflecting elements are written as
\begin{align}
\label{eq_ReceivedSignal_D}
y_{\text{U}}^{(m)}&=\sqrt{P\zeta_{\text{SU}}}\sum_{n=1}^N  h_{\text{RU}}^{(n,m)} \eta\exp(j\theta^{(n)}) h_{\text{SR}}^{(n)}s+\epsilon_{\text{U}}^{(m)},\\
\label{eq_ReceivedSignal_E}
y_{\text{E}}^{(l)}&=\sqrt{P\zeta_{\text{SE}}}\sum_{n=1}^N  h_{\text{RE}}^{(n,l)} \eta\exp(j\theta^{(n)}) h_{\text{SR}}^{(n)}s+\epsilon_{\text{E}}^{(l)},
\end{align}
respectively, where $s$ is the unit-energy symbol for transmission, $P$ is the transmitted power per symbol,  $\epsilon_\text{U}^{(m)}\sim \mathcal{CN}(0,N_0)$ and $\epsilon_\text{E}^{(l)}\sim \mathcal{CN}(0,N_0)$  are the complex additive white Gaussian noise (AWGN) with mean zero and variance $N_0$ at the node $\text{U}^{(m)}$ and $\text{E}^{(l)}$, respectively, for each $m\in\mathcal{M}$ and $l\in\mathcal{L}$.
The received SNR at $\text{U}^{(m)}$ is written as
\begin{align}
\label{eq_SNR_general}
\Gamma_{\text{U}}^{(m)}= \eta\bar{\Gamma}_{\text{U}}\lb\lvert\sum_{n=1}^N|h_{\text{RU}}^{(n,m)}||h_{\text{SR}}^{(n)}|\exp(\psi^{(n,m)}+\phi^{(n)}+\theta^{(n)})\rb\rvert^2,
\end{align}
where $|h_{\text{RU}}^{(n,m)}|$ and $|h_{\text{SR}}^{(n)}|$ are the magnitudes and, $\psi^{(n,m)}$ and $\phi^{(n)}$ are the phases of the channels $h_{\text{RU}}^{(n,m)}$ and $h_{\text{SR}}^{(n)}$, respectively, and $\bar{\Gamma}_{\text{U}}=\frac{P\zeta_{\text{SU}}}{N_0}$. 


To improve the system's secrecy, one out of $M$ users is selected opportunistically for scheduling. 
In the case where the $m$-th user is scheduled, the phase shifts of the RIS elements are configured in such a way that the received SNR of the $m$-th user is maximized in (\ref{eq_SNR_general}). Hence, the phase shifts of the RIS elements are configured as $\theta^{(n)}=-(\phi^{(n)}+\psi^{(n,m)})$ for each $n$ to align the phases of 
S-RIS and RIS-$\textrm{U}^{(m)}$ channels. 
{While this RIS phase configuration may not be optimal from the SOP perspective, its simplicity allows us to derive closed-form expressions for the SOP of user scheduling schemes, which provides useful insights into the secrecy performance.}
In this case, the received SNR for the $m$-th user in (\ref{eq_SNR_general}) is written as $\Gamma_{\text{U}}^{(m)}=\bar{\Gamma}_{\text{U}} \lvert{Y_{\text{U}}^{(m)}}\rvert^2$
where $Y_{\text{U}}^{(m)}=\eta\sum^N_{n=1} \lvert h_{\text{RU}}^{(n,m)}\rvert \lvert h_{\text{SR}}^{(n)}\rvert$.
When $N$ is sufficiently large,  the distribution of $Y_{\text{U}}^{(m)}$ can be tightly approximated by a Gaussian distribution due to the central limit theorem (CLT) and hence, the distribution of $\Gamma_{\text{U}}^{(m)}$ can be approximated with the help of a
non-central Chi-squared distribution with a single degree of freedom \cite{book_papoulis_pillai, Kudathanthirige_Amarasuriya_IRS_rayleigh}.
 The approximate CDF of $\Gamma_{\text{U}}^{(m)}$ is written from \cite{Kudathanthirige_Amarasuriya_IRS_rayleigh}
as
\begin{align}
\label{eq_cdf_gamma_m_inf}
F_{\Gamma_{\text{U}}^{(m)}}(x)
&=1-\xi Q\Big(\frac{\sqrt{{x}/{\bar{\Gamma}_{\text{U}}}}-\mu_{\textrm{U}}}{\sigma_{\text{U}}}\Big) ~~\textrm{for~} x\ge0,
\end{align}
\sloppy where 
 $\mu_{\textrm{U}}
= {\eta N\pi}/{4}$  
is the mean of $Y_{\text{U}}^{(m)}$,  
 $\sigma_{\text{U}}
=\eta\sqrt{N({16-\pi^2})/{16}}$ is the standard deviation of $Y_{\text{U}}^{(m)}$, 
$\xi={1}/{{Q}\big(-{\mu_{\textrm{U}}/\sigma_{\text{U}}}\big)}$   
is a normalization coefficient to satisfy the PDF constraint
$\int_{0}^{\infty}f_{\Gamma_{\text{U}}^{(m)}}(x)dx=1$, and ${Q}(\cdot)$ is the Gaussian $Q$-function. \footnote{{
In case of Rician fading channel, the closed-form SOP analysis will remain the same as proposed in the paper, with only mean $\mu_{\textrm{U}}$ and standard deviation $\sigma_{\text{U}}$ of $Y_{\text{U}}^{(m)}$, which can be derived from \cite{Rician_ris}, changing to reflect the Rician fading channel environment. This is because our SOP analysis uses the CLT to approximate the distribution of $\Gamma_{\text{U}}^{(m)}$. Similarly, the SOP analysis can also be extended to other fading environments.}}

As RIS phase shifts are
configured to align the phases of 
S-RIS and RIS-$\textrm{U}^{(m)}$ channels to maximize the SNR of $\text{U}^{(m)}$, the phases of S-RIS and RIS-$\textrm{E}^{(l)}$  channels will not be aligned for any eavesdroppers.
Due to this misalignment, the received SNR at the $l$-th eavesdropper in the case where the $m$-th user is scheduled is written following (\ref{eq_SNR_general}) as
\begin{align}
\label{eq_SNR_E_no_alighnment}
\Gamma_{\text{E}}^{(m,l)}&= \bar{\Gamma}_{\text{E}}\lb\lvert\sum_{n=1}^N\eta|h_{\text{RE}}^{(n,l)}||h_{\text{SR}}^{(n)}|\exp(\omega^{(n,l)}-\psi^{(n,m)})\rb\rvert^2\nn\\
&=\bar{\Gamma}_{\text{E}}\lvert Y_{\text{E}}^{(m,l)}\rvert^2,
\end{align}
where  $|h_{\text{RE}}^{(n,l)}|$ and $\omega^{(n,l)}$ are the magnitude and phase of $h_{\text{RE}}^{(n,l)}$ and $\bar{\Gamma}_{\text{E}}=\frac{P\zeta_{\text{SE}}}{N_0}$.
When $N$ is sufficiently large, the distribution of  $Y_{\text{E}}^{(m,l)}$ in the above equation approximates a complex Gaussian distribution due to the CLT, and hence, $\Gamma_{\text{E}}^{(m,l)}$ is exponentially distributed with PDF $f_{\Gamma_{\text{E}}^{(m,l)}}(x)=\frac{1}{\lambda_{\text{E}}}\exp({-\frac{x}{\lambda_{\text{E}}}})$
and average SNR $\lambda_{\text{E}}=\eta^2N\bar{\Gamma}_{\text{E}}$\cite{Renzo_Secrecy_Performance_Analysis}.

The secrecy is measured against the eavesdropper with maximum eavesdropping SNR. When the $m$-th user is scheduled, the maximum eavesdropping SNR is denoted as $\Gamma_{\text{E}}^{(m)}=\max_{l\in\mathcal{L}}\{\Gamma_{\text{E}}^{(m,l)}\}$.
The PDF of $\Gamma_{\text{E}}^{(m)}$ is obtained  by finding the CDF of $\Gamma_{\text{E}}^{(m)}$. The CDF of $\Gamma_{\text{E}}^{(m)}$ is derived as
\begin{align}
\label{eq_cdf_Z}
F_{\Gamma_{\text{E}}^{(m)}}(x)&
=\mathbb{P}\Big[\max_{l\in\mathcal{L}}\{\Gamma_{\text{E}}^{(m,l)}\} \le x\Big]\nn\\
&=1-\sum^L_{l=1}(-1)^{l+1}\binom{L}{l} \exp\Big(-\frac{ x}{\lambda_{\text{E}}^{(l)}}\Big), 
\end{align}
where $\lambda_{\text{E}}^{(l)}=\lambda_{\text{E}}/l$. 
The corresponding PDF is then obtained as 
\begin{align}
\label{eq_pdf_E}
f_{\Gamma_{\text{E}}^{(m)}}(x)&=
\sum^L_{l=1}(-1)^{l+1}\binom{L}{l} \hat{f}_{\Gamma_{\text{E}}^{(l)}}(x),
\end{align}
\sloppy where  $\hat{f}_{\Gamma_{\text{E}}^{(l)}}(x)=\frac{1}{\lambda_{\text{E}}^{(l)}} 
\exp(-\frac{ x}{\lambda_{\text{E}}^{(l)}})$. The secrecy rate achievable by $\text{U}^{(m)}$ is $C_S^{(m)}=\max\{\log_2 (\frac{1+\Gamma_{\text{U}}^{(m)}}{1+\Gamma_{\text{E}}^{(m)}} ),0\}$.

We propose two criteria for opportunistic user scheduling depending on the availability of the eavesdropping CSI. If the CSI of the eavesdroppers is not available, the user with the maximum achievable rate is scheduled such that the SNR at the scheduled user $\textrm{U}^{(m^*)}$ is written as  $\Gamma_{\text{U}}^{(m^*)}=\max_{ m\in\mathcal{M}}\{\Gamma_{\text{U}}^{(m)}\}$. This scheme is defined as the SS scheme. Otherwise,  the user $\textrm{U}^{(m^*)}$ that achieves the maximum secrecy rate $\max_{m\in\mathcal{M}}\{C_S^{(m)}\}$ among all of the users is scheduled. This scheme is defined as the OS scheme.
 Following \eqref{eq_cdf_gamma_m_inf}, the average received SNR of the individual users  $\mathbb{E}[\Gamma_{\text{U}}^{(m)}]=\frac{\eta \bar{\Gamma}  N}{4}\lb(\pi+\eta (16-\pi)^2/4\rb)$ is large when the number of RIS elements is large. As the average received SNR of the selected user is proportional to the average received SNR of the individual users for selection diversity \cite{jakes1994microwave}, opportunistic user scheduling will ensure system performance improvement with the increased number of RIS elements.



We evaluate the SOP for the RIS-aided user scheduling criteria. The SOP of the system is defined as the probability that the achievable secrecy rate of the system falls below a certain threshold. The SOP of the system when $\text{U}^{(m^*)}$ is scheduled is evaluated as \cite{basic_sec_paper} 
\begin{align}
\label{eq_exact_CLT_sop_no_sel_mult_eve}
\mathcal{P}_{\textrm{out}}
&=\mathbb{P}\lb[\frac{1+\Gamma_{\text{U}}^{(m^*)}}{1+\Gamma_{\text{E}}^{(m^*)}}<2^{R_{\textrm{th}}}\rb] \nn\\
&=\int_0^{\infty}F_{\Gamma_{\text{U}}^{(m^*)}}\lb(\rho x+\rho-1\rb)f_{\Gamma_{\text{E}}^{(m^*)}}(x)dx,
\end{align}
where $R_{\textrm{th}}$ is the threshold secrecy rate,  $\rho=2^{R_{\textrm{th}}}$, $F_{\Gamma_{\text{U}}^{(m^*)}}(x)$ is the CDF of the SNR of the scheduled user, and $f_{\Gamma_{\text{E}}^{(m^*)}}(x)$ is the PDF of the corresponding eavesdropping SNR.


{
We assume perfect CSI of source-to-RIS and RIS-to-user channels is available at the RIS.
This assumption makes the SNR distribution at the scheduled user mathematically tractable as it allows perfect phase alignment for the source-to-RIS and RIS-to-user channels in (\ref{eq_SNR_general}). The corresponding eavesdropping SNR and the SOP analysis also become mathematically tractable. The results obtained with this assumption can act as a theoretical benchmark for the secrecy performance of a practical system.  
Similar assumptions of perfect CSI availability are also commonly considered in the existing literature\cite{Ai_Ottersten_Secure_vehicular_communications, Renzo_Secrecy_Performance_Analysis,mumtaz_RIS_GC21,yadav_2021_ACTS,jiayi_zhang_TIFS21_pls,wafai_gc,wcnc_2024,Kudathanthirige_Amarasuriya_IRS_rayleigh,ghadi_2023_ris_fishersnedecor,masoud_sop_smartgrid_TII,renzo_SOP_discrete_phase_TVT,wei_shi_TWC24_sop,Trigui_Zhu_Secrecy_Outage_Probability,Yu_Schober_Robust_and_Secur_Wireless,Wang_Ni_Secrecy_performance_analysis,Fadil_pathloss_ris_green_theorem,Ellingson_RIS_pathloss,nemanja_TWC}. }
We also note that to obtain the CDF of $\Gamma_{\text{E}}^{(m)}$ in (\ref{eq_cdf_Z}) and the SOP in (\ref{eq_exact_CLT_sop_no_sel_mult_eve}), we have assumed independence of $\Gamma_{\text{E}}^{(m,l)}$ for each $l$, independence of individual $\Gamma_{\text{U}}^{(m)}$ and $\Gamma_{\text{E}}^{(m)}$  for any $m$, and independence between $\Gamma_{\text{U}}^{(m)}$ and $\Gamma_{\text{E}}^{(m)}$ for each $m$, even though these are correlated due to the common S-R link. 
Therefore, we assume the independence of SNRs to deduce important system behaviors with the system parameters without unnecessarily complicating the analysis.  The  RIS literature frequently adopts this assumption \cite{Ai_Ottersten_Secure_vehicular_communications, Renzo_Secrecy_Performance_Analysis,wafai_gc,wcnc_2024,Kudathanthirige_Amarasuriya_IRS_rayleigh,masoud_sop_smartgrid_TII,renzo_SOP_discrete_phase_TVT,Trigui_Zhu_Secrecy_Outage_Probability,Yu_Schober_Robust_and_Secur_Wireless,Wang_Ni_Secrecy_performance_analysis,nemanja_TWC}.

\section{User Scheduling without Eavesdroppers' CSI}\label{subsec_clt_sop_subopt_usrsel_mult_eve}
This section considers user scheduling when the eavesdroppers' CSI is unavailable.  In this scenario, we implement the SS scheme where the user is scheduled for which the 
instantaneous source-to-user SNR is maximum (without considering the SNR of the corresponding eavesdropping link). The SNR at the scheduled user $\textrm{U}^{(m^*)}$ is written as $\Gamma_{\text{U}}^{(m^*)}=\max_{ m\in\mathcal{M}}\{\Gamma_{\text{U}}^{(m)}\}$.  
The advantage of the SS scheme is that it does not require any eavesdropper channel CSI. The SOP  is obtained using the following theorem.

\begin{theorem}\label{theorem2}
An approximate  closed-form expression for the SOP of the SS scheme is given by 
\begin{subequations}
\label{eq_sop_suboptimal_final}
\begin{empheq}[left={\mathcal{P}_{\textrm{out}}=\empheqlbrace}]{align}
&1-\sum^M_{m=1}\sum^L_{l=1}(-1)^{m+l} \binom{M}{m}\times \nn\\
&\binom{L}{l}\xi^m J_{+}^{(m,l)} \hspace{2.2cm}\textrm{if~}\bar{\Gamma}_{\text{U}}\le\frac{\rho-1}{\mu_{\textrm{U}}^2} \label{eq_sop_suboptimal_final1} \\
&1-\sum^M_{m=1}\sum^L_{l=1}(-1)^{m+l}\binom{M}{m}\times \nn\\
&\binom{L}{l} \xi^m\Big[I_{-}^{(m,l)}
+I_+^{(m,l)}\Big] \hspace{0.45cm}~\textrm{if~}\bar{\Gamma}_{\text{U}}>\frac{\rho-1}{\mu_{\textrm{U}}^2},
\label{eq_sop_suboptimal_final2}
\end{empheq}    
\end{subequations} 
where
\begin{align}
\label{eq_SS_J1}
&J_{+}^{(m,l)}=\sum_{\boldsymbol{k} \in \mathcal{S}_m}\binom{m}{\boldsymbol{k}}
\frac{ w_1^{k_1}w_2^{k_2}w_3^{k_3}}{2^{k_1+k_2+k_3-1}}
{\mathcal{J}_{+}^{(\boldsymbol{k},l)}((\sigma^{(\boldsymbol{k})}_{\text{U}})^2)},
\\
\label{eq_SS_I1}
&I_{+}^{(m,l)}=\sum_{\boldsymbol{k} \in \mathcal{S}_m}\binom{m}{\boldsymbol{k}}
\frac{ w_1^{k_1}w_2^{k_2}w_3^{k_3}}{2^{k_1+k_2+k_3-1}}{\mathcal{I}_{+}^{(\boldsymbol{k},l)}((\sigma^{(\boldsymbol{k})}_{\text{U}})^2)},\\
\label{eq_ss_I_minus_new_theorem}
&I_{-}^{(m,l)}=1-e^{\frac{-\mu^2_{\text{U}}\bar{\Gamma}_{\text{U}}+\lb(\rho-1\rb)}{\rho\lambda_{\text{E}}^{(l)}}}+\sum_{j=1}^{m}(-1)^j\binom{m}{j}
\big(J_+^{(j,l)}-I_+^{(j,l)}\big) ,\\
\label{eq_ss_j_new_theorem}
&\mathcal{J}_{+}^{(\boldsymbol{k},l)}((\sigma^{(\boldsymbol{k})}_{\text{U}})^2) =\frac{ \bar{\Gamma}_{\text{U}}}{2\rho\lambda_{\text{E}}^{(l)}\Upsilon^{(\boldsymbol{k},l)}}
 \Bigg[\exp\lb(-\frac{1}{2(\sigma^{(\boldsymbol{k})}_{\text{U}})^2}\rb.\nn\\
 &\times\lb.\lb(\sqrt{\frac{\rho-1}{\bar{\Gamma}_{\text{U}}}}-\mu_{\textrm{U}}\rb)^2\rb)+\frac{2\mu_{\textrm{U}}\sqrt{\pi}}{2(\sigma^{(\boldsymbol{k})}_{\text{U}})^2\sqrt{\Upsilon^{(\boldsymbol{k},l)}}} \nn \\  
&\times \exp\lb(\frac{\rho-1}{\rho\lambda_{\text{E}}^{(l)}}-\frac{\mu^2_{\text{U}} \bar{\Gamma}_{\text{U}}}
{2(\sigma^{(\boldsymbol{k})}_{\text{U}})^2\rho\lambda_{\text{E}}^{(l)}\Upsilon^{(\boldsymbol{k},l)}}\rb)  \nn \\  
&\times Q\lb(\sqrt{2\Upsilon^{(\boldsymbol{k},l)}}
\lb(\sqrt{\frac{\rho-1}{\bar{\Gamma}_{\text{U}}}}
-\frac{\mu_{\textrm{U}}}{2(\sigma^{(\boldsymbol{k})}_{\text{U}})^2\Upsilon^{(\boldsymbol{k},l)}}\rb)\rb)
\Bigg],\\
\label{eq_SS_I_new}
&\mathcal{I}_{+}^{(\boldsymbol{k},l)}((\sigma^{(\boldsymbol{k})}_{\text{U}})^2)= \frac{ \bar{\Gamma}_{\text{U}}}{2\rho\lambda_{\text{E}}^{(l)}\Upsilon^{(\boldsymbol{k},l)}}
\lb[\exp\lb(-\frac{\mu^2_{\text{U}} \bar{\Gamma}_{\text{U}}-\lb(\rho-1\rb)}{\rho\lambda_{\text{E}}^{(l)}}\rb)
\rb.\nn\\&
\lb.+\frac{2\mu_{\textrm{U}}\sqrt{\pi}}{2(\sigma^{(\boldsymbol{k})}_{\text{U}})^2\sqrt{\Upsilon^{(\boldsymbol{k},l)}}}
\exp\lb(\frac{ \rho-1}{\rho\lambda_{\text{E}}^{(l)}}-\frac{\mu^2_{\text{U}} \bar{\Gamma}_{\text{U}}}
{2(\sigma^{(\boldsymbol{k})}_{\text{U}})^2\rho\lambda_{\text{E}}^{(l)}\Upsilon^{(\boldsymbol{k},l)}}\rb)\nn\rb.\\
&\times\lb.Q\lb(\frac{\sqrt{2}\mu_{\textrm{U}} \bar{\Gamma}_{\text{U}}}{\rho\lambda_{\text{E}}^{(l)}\sqrt{\Upsilon^{(\boldsymbol{k},l)}}
}\rb)
\rb],
\end{align}
$\mathcal{S}_m$ is the set of integer vectors $\boldsymbol{k}=[k_1, k_2, k_3]$   such that $k_{i}\in\{0 ,\ldots, m\}$ for each $i\in\{1 ,2, 3\}$ and $\sum_{i=1}^3k_i=m$, 
$\binom{m}{\boldsymbol{k}}=  \frac{m!}{ k_1!k_2!k_{3}!}$, {$\sigma^{(\boldsymbol{k})}_{\text{U}}= \frac{\sigma_{\text{U}}}{\sqrt{\sum_{i=1}^3k_ip_i}}$},
and
$\Upsilon^{(\boldsymbol{k},l)}= \frac{1}{2(\sigma^{(\boldsymbol{k})}_{\text{U}})^2}+\frac{\bar{\Gamma}_{\text{U}}}{\rho\lambda_{\text{E}}^{(l)}}$.
\end{theorem}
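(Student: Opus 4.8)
The plan is to begin from the general SOP expression \eqref{eq_exact_CLT_sop_no_sel_mult_eve} and exploit the i.i.d. structure of the per-user SNRs. Under the SS rule $\Gamma_{\text{U}}^{(m^*)}=\max_{m\in\mathcal{M}}\{\Gamma_{\text{U}}^{(m)}\}$, so its CDF is the $M$-th power of the single-user CDF \eqref{eq_cdf_gamma_m_inf}, namely $F_{\Gamma_{\text{U}}^{(m^*)}}(x)=[1-\xi Q((\sqrt{x/\bar{\Gamma}_{\text{U}}}-\mu_{\textrm{U}})/\sigma_{\text{U}})]^{M}$. First I would insert this, evaluated at $\rho x+\rho-1$, together with the eavesdropper PDF \eqref{eq_pdf_E} into the integral. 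Expanding the $M$-th power by the binomial theorem gives $\sum_{m=0}^{M}\binom{M}{m}(-\xi)^{m}Q^{m}(\cdot)$; the $m=0$ term integrates against $f_{\Gamma_{\text{E}}^{(m^*)}}$ to produce the leading constant $1$, while the $l$-sum and the kernel $\tfrac{1}{\lambda_{\text{E}}^{(l)}}e^{-x/\lambda_{\text{E}}^{(l)}}$ come from \eqref{eq_pdf_E}. Collecting signs via $(-\xi)^{m}(-1)^{l+1}=-(-1)^{m+l}\xi^{m}$ reproduces the outer structure of \eqref{eq_sop_suboptimal_final}, reducing the task to the building-block integral $\int Q^{m}(\cdot)\,\tfrac{1}{\lambda_{\text{E}}^{(l)}}e^{-x/\lambda_{\text{E}}^{(l)}}dx$.

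To handle $Q^{m}(\cdot)$, I would replace $Q$ by a three-term exponential approximation of the form $Q(z)\approx\tfrac12\sum_{i=1}^{3}w_i\exp(-p_i z^{2})$, valid for $z\ge0$, and apply the multinomial theorem. This produces exactly the sum over $\boldsymbol{k}\in\mathcal{S}_m$ with coefficient $\binom{m}{\boldsymbol{k}}w_1^{k_1}w_2^{k_2}w_3^{k_3}$, a factor $2^{-(m-1)}$ inherited from the leading $\tfrac12$, and a pooled quadratic exponent that I would absorb into the single effective scale $\sigma^{(\boldsymbol{k})}_{\text{U}}=\sigma_{\text{U}}/\sqrt{\sum_i k_i p_i}$, so that each term carries $\exp(-\tfrac{1}{2(\sigma^{(\boldsymbol{k})}_{\text{U}})^{2}}(\sqrt{(\rho x+\rho-1)/\bar{\Gamma}_{\text{U}}}-\mu_{\textrm{U}})^{2})$. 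The essential caveat is that this approximation is only valid where the $Q$-argument is non-negative. Since that argument, $(\sqrt{(\rho x+\rho-1)/\bar{\Gamma}_{\text{U}}}-\mu_{\textrm{U}})/\sigma_{\text{U}}$, increases in $x$, its sign at $x=0$ decides everything and is governed precisely by whether $\bar{\Gamma}_{\text{U}}\le(\rho-1)/\mu_{\textrm{U}}^{2}$ or not; this dichotomy is the source of the two branches in \eqref{eq_sop_suboptimal_final}.

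Next I would execute the case analysis. When $\bar{\Gamma}_{\text{U}}\le(\rho-1)/\mu_{\textrm{U}}^{2}$ the argument stays non-negative on $[0,\infty)$, so the approximation applies on the whole range and the integral is $J_{+}^{(m,l)}$, giving \eqref{eq_sop_suboptimal_final1}. When $\bar{\Gamma}_{\text{U}}>(\rho-1)/\mu_{\textrm{U}}^{2}$ the argument vanishes at the crossover $x_0=(\mu_{\textrm{U}}^{2}\bar{\Gamma}_{\text{U}}-(\rho-1))/\rho$, and I would split the integral there. On $[x_0,\infty)$ the approximation applies and yields $I_{+}^{(m,l)}$. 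On $[0,x_0]$ the argument is negative, so I would use the reflection $Q(y)=1-Q(-y)$ with $-y\ge0$, expand $(1-Q(-y))^{m}$ binomially, and note that the exponential approximation depends only on $y^{2}$; each resulting sub-integral over $[0,x_0]$ is then the full-range integral minus the tail, i.e. $J_{+}^{(j,l)}-I_{+}^{(j,l)}$, while the $j=0$ term contributes $1-e^{-x_0/\lambda_{\text{E}}^{(l)}}=1-e^{(-\mu_{\textrm{U}}^{2}\bar{\Gamma}_{\text{U}}+(\rho-1))/(\rho\lambda_{\text{E}}^{(l)})}$. Assembling these reproduces $I_{-}^{(m,l)}$ in \eqref{eq_ss_I_minus_new_theorem} and hence \eqref{eq_sop_suboptimal_final2}.

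The main obstacle is the closed-form evaluation of the elementary integrals defining $\mathcal{J}_{+}^{(\boldsymbol{k},l)}$ and $\mathcal{I}_{+}^{(\boldsymbol{k},l)}$. For these I would substitute $u=\sqrt{(\rho x+\rho-1)/\bar{\Gamma}_{\text{U}}}$, which turns $e^{-x/\lambda_{\text{E}}^{(l)}}$ into $e^{(\rho-1)/(\rho\lambda_{\text{E}}^{(l)})}\,e^{-u^{2}\bar{\Gamma}_{\text{U}}/(\rho\lambda_{\text{E}}^{(l)})}$ and, via $dx=\tfrac{2u\bar{\Gamma}_{\text{U}}}{\rho}\,du$, introduces a linear prefactor $u$. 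The integrand becomes a Gaussian in $u$ centred at $\mu_{\textrm{U}}$ times a zero-centred Gaussian in $u$ times $u$; combining the two quadratics produces exactly the coefficient $\Upsilon^{(\boldsymbol{k},l)}=\tfrac{1}{2(\sigma^{(\boldsymbol{k})}_{\text{U}})^{2}}+\tfrac{\bar{\Gamma}_{\text{U}}}{\rho\lambda_{\text{E}}^{(l)}}$. Completing the square and splitting $u$ into its shifted-mean part plus a residual then separates the integral into a boundary exponential term, evaluated at $x=0$ for $\mathcal{J}_{+}$ and at $x=x_0$ for $\mathcal{I}_{+}$, and a residual Gaussian tail that evaluates to the $Q$-function appearing in \eqref{eq_ss_j_new_theorem} and \eqref{eq_SS_I_new}. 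This Gaussian-times-linear evaluation, together with keeping the integration limits consistent across the case split, is the principal technical hurdle; the remaining manipulations are routine binomial, multinomial, and change-of-variable steps.
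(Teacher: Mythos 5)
Your proposal is correct and follows essentially the same route as the paper's proof: binomial expansion of the max-user CDF against the eavesdropper PDF, the three-term exponential $Q$-function approximation combined with the multinomial theorem and the pooled scale $\sigma^{(\boldsymbol{k})}_{\text{U}}=\sigma_{\text{U}}/\sqrt{\sum_i k_i p_i}$, the case split at $x_0=\big(\mu_{\textrm{U}}^2\bar{\Gamma}_{\text{U}}-(\rho-1)\big)/\rho$ with the reflection $Q(y)=1-Q(-y)$ and the ``interval equals full range minus tail'' identity producing the $J_+^{(j,l)}-I_+^{(j,l)}$ terms and the $j=0$ contribution $1-e^{-x_0/\lambda_{\text{E}}^{(l)}}$ in $I_-^{(m,l)}$, and finally the completing-the-square change of variables reducing $\mathcal{J}_+^{(\boldsymbol{k},l)}$ and $\mathcal{I}_+^{(\boldsymbol{k},l)}$ to linear-times-Gaussian integrals (which the paper evaluates via Gradshteyn--Ryzhik eq.~(3.321), exactly the boundary-exponential-plus-$Q$-function split you describe). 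No gaps to report; the only cosmetic slip is writing the approximation kernel as $\exp(-p_i z^2)$ rather than $\exp(-p_i z^2/2)$, which your subsequent definition of $\sigma^{(\boldsymbol{k})}_{\text{U}}$ already corrects.
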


\begin{proof}
The evaluation of the SOP in (\ref{eq_exact_CLT_sop_no_sel_mult_eve}) requires the distribution of the scheduled user channel SNR $\Gamma_{\text{U}}^{(m^*)}$ 
and the corresponding eavesdropping channel SNR.
The CDF $F_{\Gamma_{\text{U}}^{(m^*)}}(x)$  is obtained  with the help of (\ref{eq_cdf_gamma_m_inf}) as
\begin{align}
\label{eq_SS_cdf}
&F_{\Gamma_{\text{U}}^{(m^*)}}(x)
=\mathbb{P}\Big[\max_{ m\in\mathcal{M}}\{\Gamma_{\text{U}}^{(m)}\}\le x\Big]\nn\\
&
=1-\sum^M_{m=1}(-1)^{m+1}\binom{M}{m}\xi^m \Bigg[Q\Big(\frac{\sqrt{{x}/{\bar{\Gamma}_{\text{U}}}}-\mu_{\textrm{U}}}{\sigma_{\text{U}}}\Big)\Bigg]^m.
\end{align}
As the SNR for each eavesdropper is identical for any scheduled user due to the independent and identically distributed RIS-$\textrm{U}^{(m)}$  channels for any $m\in \mathcal{M}$ and RIS-$\textrm{E}^{(l)}$  channels for any $l\in \mathcal{L}$, $f_{\Gamma_{\text{E}}^{(m^*)}}(x)=f_{\Gamma_{\text{E}}^{(m)}}(x)$ for any $m \in\mathcal{M}$. The SOP of the system is obtained by using $F_{\Gamma_{\text{U}}^{(m^*)}}(x)$ from (\ref{eq_SS_cdf}) and $f_{\Gamma_{\text{E}}^{(m^*)}}(x)$ from (\ref{eq_pdf_E}) in (\ref{eq_exact_CLT_sop_no_sel_mult_eve}) as
\begin{align}
\label{eq_SS_SOP2}
\mathcal{P}_{\textrm{out}}= 1-\sum^M_{m=1}\sum^L_{l=1}(-1)^{m+l}\binom{M}{m} \binom{L}{l} \xi^m \mathcal{P}_{\textrm{out}}^{(m,l)},
\end{align}
where 
\begin{align}
\label{eq_SS_Q}
\mathcal{P}_{\textrm{out}}^{(m,l)}=\int_0^{\infty}\Bigg[Q\Big(\frac{\sqrt{{(\rho-1+\rho x)}/{\bar{\Gamma}_{\text{U}}}}-\mu_{\textrm{U}}}{\sigma_{\text{U}}}\Big)\Bigg]^m \hat{f}_{\Gamma_{\text{E}}^{(l)}}(x).
\end{align}
When $M=1$, (\ref{eq_SS_SOP2}) provides the SOP of a single-user system.
It is difficult to derive an exact  closed-form solution of the integral in  (\ref{eq_SS_Q}), hence, we use the following approximation of the $Q$-function from \cite{paper_q_approx}, 
\begin{subequations}
  \begin{empheq}[left=Q(x)\approx\empheqlbrace]{align}
  \label{eq_Q_func_sum_3exp1}
    & \sum_{i=1}^3\frac{ w_i}{2}\exp\lb(-\frac{p_i x^2}{2}\rb) ~~~~\textrm{when} ~~x\ge0 \\
  \label{eq_Q_func_sum_3exp2}
    & 1-\sum_{i=1}^3\frac{ w_i}{2}\exp\lb(-\frac{p_i x^2}{2}\rb) \textrm{when} ~~x<0,
  \end{empheq}
\end{subequations}
where
$w_i=\lb\{\frac{1}{6}, \frac{1}{3}, \frac{1}{3}\rb\}$ and $p_i=\lb\{1, 4, \frac{4}{3}\rb\}$ for $i\in\{1,2,3\}$. To apply the $Q$-function approximation shown in (\ref{eq_Q_func_sum_3exp1})-(\ref{eq_Q_func_sum_3exp2}) in  (\ref{eq_SS_Q}), we have to divide the feasible integration region of $x$ (i.e. $0\le x\le\infty$) appropriately depending on whether the argument of $Q$-function in (\ref{eq_SS_Q}) is positive or negative in that region.
The region in which $Q$-function in (\ref{eq_SS_Q}) has a positive argument is
$\frac{\mu^2_{\text{U}}\bar{\Gamma}_{\text{U}}-\lb(\rho-1\rb)}{\rho}\le x \le\infty$ and the argument of the $Q$-function in (\ref{eq_SS_Q}) is negative in the region $0\le x<\frac{\mu^2_{\text{U}}\bar{\Gamma}_{\text{U}}-\lb(\rho-1\rb)}{\rho}$. Next, we evaluate the SOP by taking into account the following two cases.

Case 1: When $\bar{\Gamma}_{\text{U}}\le\frac{\rho-1}{\mu_{\textrm{U}}^2}$, we ascertain 
that the $Q$-function in (\ref{eq_SS_Q}) has a positive argument in the entire feasible range of $0\le x\le \infty$.
Hence, we use the approximation of the $Q$-function in (\ref{eq_Q_func_sum_3exp1}) to obtain $\mathcal{P}_{\textrm{out}}^{(m,l)}=J^{(m,l)}_+,$
where 
\begin{align}
\label{eq_J1_subopt}
J^{{(m,l)}}_+
&=\int_0^{\infty}\Bigg(\sum_{i=1}^3\frac{ w_i}{2}\exp\Bigg(-\frac{\lb(\sqrt{\frac{\rho-1+\rho x}{\bar{\Gamma}_{\text{U}}}}-\mu_{\textrm{U}}\rb)^2}{2({\sigma^{(i)}_{\text{U}}})^2}\Bigg)\Bigg)^m  \nn\\
&\times \hat{f}_{\Gamma_{\text{E}}^{(l)}}(x)dx
\end{align}
and 
$ \sigma^{(i)}_{\text{U}} =\frac{\sigma_{\text{U}} }{\sqrt{p_i}}$. 
By applying the multinomial theorem in (\ref{eq_J1_subopt})  we obtain (\ref{eq_SS_J1}),
where 
\begin{align}
\label{eq_I1_prime_subopt_ext}
\mathcal{J}_{+}^{(\boldsymbol{k},l)}((\sigma^{(\boldsymbol{k})}_{\text{U}})^2)&=
\int_0^{\infty}\frac{1}{2}\exp\Bigg(-\frac{ \lb({\sqrt{\frac{\rho-1+\rho x}{\bar{\Gamma}_{\text{U}}}}-\mu_{\textrm{U}}}\rb)^2}{2(\sigma^{(\boldsymbol{k})}_{\text{U}})^2}\Bigg) \nn\\
&\times\hat{f}_{\Gamma_{\text{E}}^{(l)}}(x)dx.
\end{align}
The solution of $\mathcal{J}_{+}^{(\boldsymbol{k},l)}((\sigma^{(\boldsymbol{k})}_{\text{U}})^2)$ is provided in Appendix \ref{appendix_Jplus_SS} and the result is expressed in (\ref{eq_ss_j_new_theorem}).

Case 2: {When $\bar{\Gamma}_{\text{U}}>\frac{\rho-1}{\mu_{\textrm{U}}^2}$, the $Q$-function in (\ref{eq_SS_Q}) has  negative  argument  in the range $0\le x<\frac{\mu^2_{\text{U}}\bar{\Gamma}_{\text{U}}-\lb(\rho-1\rb)}{\rho}$ and positive argument in  the range $\frac{\mu^2_{\text{U}}\bar{\Gamma}_{\text{U}}-\lb(\rho-1\rb)}{\rho}\le x \le\infty$.} Thus, we rewrite (\ref{eq_SS_Q}) with the help of (\ref{eq_Q_func_sum_3exp1}) and (\ref{eq_Q_func_sum_3exp2}) as
$\mathcal{P}_{\textrm{out}}^{(m,l)}=I_+^{(m,l)}+I_-^{(m,l)}$
where
\begin{align}
\label{eq_I1_subopt}
I_+^{(m,l)}
&=\int_{\frac{\mu^2_{\text{U}}\bar{\Gamma}_{\text{U}}-\lb(\rho-1\rb)}{\rho}}^{\infty}\Bigg[\sum_{i=1}^3\frac{ w_i}{2}\nn\\
&\times\exp\Bigg(-\frac{\lb(\sqrt{\frac{\rho-1+\rho x}{\bar{\Gamma}_{\text{U}}}}-\mu_{\textrm{U}}\rb)^2}{2({\sigma^{(i)}_{\text{U}}})^2}\Bigg)\Bigg]^m \hat{f}_{\Gamma_{\text{E}}^{(l)}}(x)dx,\\
\label{eq_I2_subopt}
I_-^{(m,l)}
&=\int_0^{\frac{\mu^2_{\text{U}}\bar{\Gamma}_{\text{U}}-\lb(\rho-1\rb)}{\rho}}\Bigg[1-\sum_{i=1}^3\frac{ w_i}{2} \nn\\
&\times\exp\Bigg(-\frac{\lb(\sqrt{\frac{\rho-1+\rho x}{\bar{\Gamma}_{\text{U}}}}-\mu_{\textrm{U}}\rb)^2}{2({\sigma^{(i)}_{\text{U}}})^2}\Bigg) \Bigg]^m\hat{f}_{\Gamma_{\text{E}}^{(l)}}(x)dx. 
\end{align}
For the solution of $I_+^{(m,l)}$, we apply the multinomial theorem in (\ref{eq_I1_subopt}) and after some further manipulations 
 we obtain the result in \eqref{eq_SS_I1}, where
\begin{align}
\label{eq_SS_I1_appendix2}
\mathcal{I}_{+}^{(\boldsymbol{k},l)}((\sigma^{(\boldsymbol{k})}_{\text{U}})^2)
&=
\int_{\frac{\mu^2_{\text{U}}\bar{\Gamma}_{\text{U}}-\lb(\rho-1\rb)}{\rho}}^{\infty}\exp\Bigg(\frac{- \lb({\sqrt{\frac{\rho-1+\rho x}{\bar{\Gamma}_{\text{U}}}}-\mu_{\textrm{U}}}\rb)^2}{2(\sigma^{(\boldsymbol{k})}_{\text{U}})^2}\Bigg) \nn\\
&\times\hat{f}_{\Gamma_{\text{E}}^{(l)}}(x)dx.
\end{align}
The solution of  $\mathcal{I}_{+}^{(\boldsymbol{k},l)}((\sigma^{(\boldsymbol{k})}_{\text{U}})^2)$ is  provided in  Appendix  \ref{appendix_Iplus_SS} and the result is expressed in (\ref{eq_SS_I_new}).

For the solution of $I_-^{(m,l)}$, we apply the binomial theorem in (\ref{eq_I2_subopt}) to obtain
\begin{align}
\label{eq_I2_subopt2_new}
&I_{-}^{(m,l)}
=\int_0^{\frac{\mu^2_{\text{U}}\bar{\Gamma}_{\text{U}}-\lb(\rho-1\rb)}{\rho}}\Bigg[1+
\sum_{j=1}^{m}(-1)^j\binom{m}{j} \nn\\
&\times\Bigg(\sum_{i=1}^3\frac{ w_i}{2}
\exp\Bigg(-\frac{\lb({\sqrt{\frac{\rho-1+\rho x}{\bar{\Gamma}_{\text{U}}}}-\mu_{\textrm{U}}}\rb)^2}{2({\sigma^{(i)}_{\text{U}}})^2}\Bigg)\Bigg)^j\Bigg]\hat{f}_{\Gamma_{\text{E}}^{(l)}}(x)dx\nn\\
&=\int_0^{\frac{\mu^2_{\text{U}}\bar{\Gamma}_{\text{U}}-\lb(\rho-1\rb)}{\rho}}\hat{f}_{\Gamma_{\text{E}}^{(l)}}(x)dx+\sum_{j=1}^{m}(-1)^j\binom{m}{j}(J_+^{(j,l)}-I_+^{(j,l)}).   
\end{align}
The above equation is written with the help of the integrals $J_+^{(m,l)}$ in (\ref{eq_J1_subopt}) and $I_+^{(m,l)}$ in (\ref{eq_I1_subopt}) by appropriately changing index from $m$ to $j$. The already developed solutions of $J_+^{(m,l)}$ in (\ref{eq_SS_J1}) and $I_+^{(m,l)}$ in (\ref{eq_SS_I1}) can be used to arrive at the solution of $I_-^{(m,l)}$. The solution of $I_-^{(m,l)}$ is expressed in (\ref{eq_ss_I_minus_new_theorem}).
\end{proof}

\begin{corollary}
In the case of a source with multiple antennas, the SOP of the best source antenna and user pair scheduling scheme that achieves the highest SNR among all (source antenna, user) pairs can be obtained from Theorem 1 by replacing $M$ with $K\times M$, where $K$ denotes the number of antennas at the source.
\end{corollary}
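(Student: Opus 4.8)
The plan is to show that the best (source antenna, user) pair scheduling scheme in a $K$-antenna-source system reduces, at the level of the SNR distributions, exactly to the single-antenna SS scheme of Theorem~\ref{theorem2} but with an enlarged selection pool of size $KM$ in place of $M$. First I would set up the multiple-antenna model: let $h_{\text{SR}}^{(n,k)}$ denote the channel from the $k$-th source antenna to the $n$-th RIS element, for $k\in\{1,\ldots,K\}$. When the $k$-th antenna serves the $m$-th user, the RIS phases are aligned to that specific (antenna, user) pair, so that, following (\ref{eq_SNR_general}), the post-alignment SNR is $\Gamma_{\text{U}}^{(k,m)}=\bar{\Gamma}_{\text{U}}\,\lvert \eta\sum_{n=1}^{N}\lvert h_{\text{RU}}^{(n,m)}\rvert\,\lvert h_{\text{SR}}^{(n,k)}\rvert\rvert^2$. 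The key observation is that each $\Gamma_{\text{U}}^{(k,m)}$ is built from the same zero-mean, unit-variance complex Gaussian fading coefficients as $\Gamma_{\text{U}}^{(m)}$ in Theorem~\ref{theorem2}; hence, by the identical CLT argument leading to (\ref{eq_cdf_gamma_m_inf}), every $\Gamma_{\text{U}}^{(k,m)}$ has the same marginal CDF $F_{\Gamma_{\text{U}}^{(m)}}(x)$ with identical parameters $\mu_{\textrm{U}}$, $\sigma_{\text{U}}$, and $\xi$.

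Next I would characterize the scheduled-link SNR. Best antenna-user pair scheduling selects $\Gamma_{\text{U}}^{(*)}=\max_{k,m}\{\Gamma_{\text{U}}^{(k,m)}\}$, i.e.\ the maximum over all $KM$ pairs. Invoking the same independence assumption already adopted in the paper, these $KM$ variables are treated as i.i.d., so that $F_{\Gamma_{\text{U}}^{(*)}}(x)=\big[F_{\Gamma_{\text{U}}^{(m)}}(x)\big]^{KM}$. Expanding this $KM$-th power with the binomial theorem reproduces exactly the expression in (\ref{eq_SS_cdf}), but with the upper summation limit and the exponent $M$ replaced by $KM$; this is the only place where $K$ enters the user-side statistics.

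Finally I would verify that the eavesdropper side is untouched. Whichever antenna is selected, its channel to the RIS remains zero-mean unit-variance complex Gaussian, and the RIS phases, aligned to the chosen (antenna, user) pair, are misaligned with respect to every eavesdropper link; thus each $\Gamma_{\text{E}}^{(m^{*},l)}$ stays exponential with the same mean $\lambda_{\text{E}}$, the ordered maximum over the $L$ eavesdroppers retains the PDF in (\ref{eq_pdf_E}), and $f_{\Gamma_{\text{E}}^{(m^{*})}}(x)$ is unchanged. Substituting the modified user CDF and the unchanged eavesdropper PDF into the SOP integral (\ref{eq_exact_CLT_sop_no_sel_mult_eve}) yields precisely the derivation of Theorem~\ref{theorem2} with $M\to KM$ throughout, giving the claimed result. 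The main obstacle is not the algebra, which is verbatim that of Theorem~\ref{theorem2}, but the justification step: one must argue carefully that, under the paper's independence assumption, the full collection of $KM$ pair SNRs may be treated as i.i.d.\ (despite the RIS-to-user channels being shared across antennas and the antenna-to-RIS channels being shared across users), and that the selected transmit antenna does not alter the eavesdropper SNR law, so that a single substitution $M\to KM$ suffices.
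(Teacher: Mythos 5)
Your proposal is correct and follows essentially the same route as the paper's proof: align the RIS phases to the selected (antenna, user) pair, observe that each $\Gamma_{\text{U}}^{(k,m)}$ has the same marginal CDF as in (\ref{eq_cdf_gamma_m_inf}) via the CLT, take the maximum over the $KM$ pairs under the independence assumption to get $\bigl[F_{\Gamma_{\text{U}}^{(m)}}(x)\bigr]^{KM}$, note the eavesdropper distribution is antenna-invariant, and substitute into the SOP integral. Your explicit flagging of the dependence issue (shared RIS-to-user and antenna-to-RIS channels across pairs) is a point the paper itself resolves only by assumption, so your treatment matches the paper's in both substance and rigor.
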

\begin{proof}
As the best source antenna-user pair  $(k^*, m^*)$ will provide the maximum antenna-user pair SNR among all
source antenna-user pairs, the received SNR at the user can be written as 
\begin{align}
\Gamma_{\text{U}}^{(k^*, m^*)}=\max_{ k, m}\{\Gamma_{\text{U}}^{(k,m)}\},
\end{align}
where the SNR at the $m$-th user from the $k$-th antenna 
$\Gamma_{\text{U}}^{(k,m)}$ is represented as
\begin{align}
\label{eq_SNR_mul_ant}
\Gamma_{\text{U}}^{(k,m)}&= \eta\bar{\Gamma}_{\text{U}}\lb\lvert\sum_{n=1}^N|h_{\text{RU}}^{(n,m)}|{h}_{\text{SR}}^{(k,n)}|\rb.\nn\\
&\lb.\times \exp(\psi^{(n,m)}+\phi^{(k,n)}+\theta^{(k,n)})\rb\rvert^2,
\end{align}
$\phi^{(k,n)}$ being the phase of the channel from the $k$-th antenna to the $n$-th RIS element and $\theta^{(k,n)}$ being the controllable RIS phase shift introduced at the $n$-th RIS element for the $k$-th antenna. 
In this case, we align the phases of $k$-th antenna to RIS channel and the RIS to $m$-th user channel pair by setting $\theta^{(k,n)}=-(\psi^{(n,m)}+\phi^{(k,n)})$. This maximizes the SNR at the scheduled user, as in the SS scheme with a single antenna at the source. 
The CDF of $\Gamma_{\text{U}}^{(k,m)}$ will be the same as that of ${\Gamma_{\text{U}}^{(m)}}$ in (\ref{eq_cdf_gamma_m_inf}), assuming independence between each end-to-end source antenna-user pair channel and thus the CDF $F_{\Gamma_{\text{U}}^{(k^*,m^*)}}(x)$  is obtained as
\begin{align}
\label{eq_SS_cdf_mul_ant}
&F_{\Gamma_{\text{U}}^{(k^*,m^*)}}(x)
=\mathbb{P}\Big[\max_{ k,  m}\{\Gamma_{\text{U}}^{(k,m)}\}\le x\Big]\nn\\
&
=\lb(1-\xi Q\Big(\frac{\sqrt{{x}/{\bar{\Gamma}_{\text{U}}}}-\mu_{\textrm{U}}}{\sigma_{\text{U}}}\Big) \rb)^{KM}~~\textrm{for~} x\ge0.
\end{align}
We note that the distribution of the eavesdropping link remains the same as in (\ref{eq_pdf_E}) for the single-antenna SS scheduling case. This is because, for any antenna selection, the fading distribution of the eavesdropping channel remains unchanged. The SOP is then derived as per Theorem 1, but with $\Gamma_{\text{U}}^{(k^*,m^*)}$ instead of $\Gamma_{\text{U}}^{(m^*)}$ used in the SOP expression.
\end{proof}

\color{black}

The theorem provides the SOP of the system for the general case which encompasses two special cases: i) a single user with a single eavesdropper and ii) a single user with multiple eavesdroppers. 
Next, we consider the user scheduling when the eavesdroppers’ CSI is available.

\section{User Scheduling with Eavesdroppers'CSI}\label{subsec_clt_sop_opt_usrsel_mult_eve}
The OS scheme can be implemented by selecting the user for which the secrecy rate is maximum when global CSI is available. This means that the CSI of the eavesdroppers is also available which can be assumed when the eavesdroppers are also a part of the network \cite{Mukherje_Swindlehurst_Principles_of_physical}. Even if the global CSI is not available in practice for selection, the performance of the OS scheme can provide the theoretical secrecy performance bound of the user scheduling schemes. 
The SOP of the OS scheme is derived as
\begin{align}
\label{eq_SOP_optimal}
\mathcal{P}_{\textrm{out}}&=\mathbb{P}\lb[\max_{m\in\mathcal{M}}\{C_S^{(m)}\}<R_{\textrm{th}}\rb].
\end{align}
As we assume  $\Gamma_{\text{U}}^{(m)}$  for each $m\in\mathcal{M}$ to be independent identically distributed and $\Gamma_{\text{E}}^{(m)}$ for each $m\in\mathcal{M}$ also to be independent identically distributed, $\mathcal{P}_{\textrm{out}}
=(\mathcal{P}_{\textrm{out}}^{(m)})^M$ where $m$ is any $m\in \mathcal{M}$. We note that $\mathcal{P}_{\textrm{out}}^{(m)}$, for any $m$, is equivalent to the SOP of a single-user system. 
Thus, the SOP for the OS scheme when eavesdropping CSI is available, is directly derived with the help of the SOP of the single-user case from Theorem \ref{theorem2} for the special case $M=1$. 
\color{black}

The SOP derived in the earlier sections provides approximate closed-form solutions; however, it is difficult to analyze the effect of system parameters on the SOP performance as these are very complex. Hence, in the next section, we will obtain approximate expressions in the high-SNR regime for the SOP of the SS and OS schemes to analyze the effect of system parameters, $P/N_0$, $N$, $M$, $d_{\text{SR}}$, $d_{\text{RU}}$, $d_{\text{RE}}$, $\rho$,  and $\eta$ on the secrecy performance.

\section{SOP in the high-SNR regime} \label{sec_asymptotic}

In this section, the SOP analysis is carried out in the high-SNR regime, i.e., when $P/N_0\approx\infty$, which implies $\bar{\Gamma}_{\text{U}}\approx \infty$ and $\bar{\Gamma}_{\text{E}}\approx \infty$. 
In this scenario, the SOP is approximated by neglecting unity from both the numerator and denominator of the ratio inside the probability in (\ref{eq_exact_CLT_sop_no_sel_mult_eve}) as
\begin{align}
\label{eq_SOP_approx}
\mathcal{P}_{\textrm{out}} &=\mathbb{P}\Bigg[\frac{\Gamma_{\text{U}}^{(m^*)}}{\Gamma_{\text{E}}^{(m^*)}}<\rho\Bigg]=\int_0^{\infty}F_{\Gamma_{\text{U}}^{(m^*)}}\lb(\rho x\rb)f_{\Gamma_{\text{E}}^{(m^*)}}(x)dx.
\end{align}
We first provide the high-SNR analysis for the single-user case, i.e., $M=1$, as it can yield simplified expressions and consequently provide a better understanding of the SOP performance with system parameters $P/N_0$, $N$, $M$, $d_{\text{SR}}$, $d_{\text{RU}}$, $d_{\text{RE}}$, $\rho$,  and $\eta$. This approach will then be generalized to the case of multi-user scheduling. 
The next section finds the SOP of the single-user scenario in the high-SNR regime.

\subsection{Single-User Scenario}\label{subsec_single_user_mult_eve_hs}

We begin with the SOP expression presented in Theorem \ref{theorem2} for $M=1$. As $M=1$,   all the indices of $m$ are dropped from (\ref{eq_sop_suboptimal_final1}) and (\ref{eq_sop_suboptimal_final2}). 
In the high-SNR regime, $\bar{\Gamma}_{\text{U}} \gg{(\rho-1)}/{\mu_{\textrm{U}}^2}$ and thus the case when $\bar{\Gamma}_{\text{U}} \le{(\rho-1)}/{\mu_{\textrm{U}}^2}$ in (\ref{eq_sop_suboptimal_final1}) is not applicable, further in (\ref{eq_ss_I_minus_new_theorem}), 
$\exp\Big(-\frac{\mu^2_{\text{U}}\bar{\Gamma}_{\text{U}}-\lb(\rho-1\rb)}{\rho\lambda_{\text{E}}^{(l)}}\Big)\approx\exp\Big(-\frac{\mu^2_{\text{U}}\bar{\Gamma}_{\text{U}}}{\rho\lambda_{\text{E}}^{(l)}}\Big)$.  Moreover, as $\mu_{\textrm{U}}^2\bar{\Gamma}_{\text{U}} \approx\infty$, $I_{+}^{(l)}$ is negligible as compared to $J_{+}^{(l)}$ in (\ref{eq_ss_I_minus_new_theorem}) due to the negligible integration region in (\ref{eq_I1_subopt}) for $I_{+}^{(l)}$.  
 Thus, the SOP from (\ref{eq_sop_suboptimal_final2}) can be approximated by
\begin{align}
\label{eq_sop_high0}
\mathcal{P}_{\textrm{out}}=
1-\sum^L_{l=1}(-1)^{l+1}\binom{L}{l}\xi\Big(1-\exp\Big(-\frac{\mu^2_{\text{U}}\bar{\Gamma}_{\text{U}}}{\rho\lambda_{\text{E}}^{(l)}}\Big)-J_{+}^{(l)} \Big),
\end{align}
where $J_{+}^{(l)}=\sum_{i=1}^3 w_i\mathcal{J}_{+}^{(i,l)}(({\sigma^{(i)}_{\text{U}}})^2)$ is obtained from  (\ref{eq_J1_subopt})  when $M=1$. Subsequently, to obtain $\mathcal{J}_{+}^{(i,l)}(({\sigma^{(i)}_{\text{U}}})^2)$ in the high-SNR regime, we approximate $(\rho-1)/\bar{\Gamma}_\text{U}\approx0$ in (\ref{eq_I1_prime_subopt_ext}) as $\bar{\Gamma}_\text{U}\approx\infty$. Thus,  $\mathcal{J}_{+}^{(i,l)}(({\sigma^{(i)}_{\text{U}}})^2)$ in (\ref{eq_I1_prime_subopt_ext}) is approximated in the high-SNR regime as
\begin{align}
\label{eq_Jplus_approx}
&\mathcal{J}_{+}^{(i,l)}(({\sigma^{(i)}_{\text{U}}})^2)=\int_0^{\infty}\frac{1}{2}\exp\Big(-\frac{\lb(\sqrt{\frac{\rho x}{\bar{\Gamma}_{\text{U}}}}-\mu_{\textrm{U}}\rb)^2}{2({\sigma^{(i)}_{\text{U}}})^2} \Big)
\hat{f}_{\Gamma_{\text{E}}^{(l)}}(x)dx
\\
\label{eq_Jplus_closed}
&=\frac{\bar{\Gamma}_{\text{U}}}{2\rho\lambda_{\text{E}}^{(l)}\Upsilon^{(i,l)}}
\Bigg[\exp\Big(\frac{-\mu^2_{\text{U}}}{2({\sigma^{(i)}_{\text{U}}})^2}\Big)+\frac{\mu_{\textrm{U}}\sqrt{\pi}}{({\sigma^{(i)}_{\text{U}}})^2\sqrt{\Upsilon^{(i,l)}}}\nn \\
&\times \exp\Big(\frac{-\mu^2_{\text{U}}\bar{\Gamma}_{\text{U}}}{2({\sigma^{(i)}_{\text{U}}})^2 \rho\lambda_{\text{E}}^{(l)}
\Upsilon^{(i,l)}}
\Big)
Q\Big(\frac{-\mu_{\textrm{U}}\sqrt{2}}{2({\sigma^{(i)}_{\text{U}}})^2\sqrt{\Upsilon^{(i,l)}}}\Big) 
\Bigg].
\end{align}
The proof of the solution of $\mathcal{J}_{+}^{(i,l)}(({\sigma^{(i)}_{\text{U}}})^2)$ obtained in (\ref{eq_Jplus_closed}) is provided in Appendix \ref{appendix_proof_J_plus}. Therefore,  we obtain the SOP in the high-SNR regime with the help of   (\ref{eq_sop_high0}) and (\ref{eq_Jplus_closed}).  

As $N$ is very large in a typical RIS-aided system,  $\frac{\mu^2_{\text{U}}\bar{\Gamma}_{\text{U}}}{\rho\lambda_{\text{E}}^{(l)}}$ and $\frac{\mu^2_{\text{U}}}{({\sigma^{(i)}_{\text{U}}})^2}$ are very large and 
$\Upsilon^{(i,l)}=\frac{1}{2({\sigma^{(i)}_{\text{U}}})^2}+\frac{ \bar{\Gamma}_{\text{U}}}{\rho\lambda_{\text{E}}^{(l)}}$ is very small. 
Hence, we can approximate $\xi\approx1$, $\exp\Big(\frac{-\mu^2_{\text{U}}\bar{\Gamma}_{\text{U}}}{\rho\lambda_{\text{E}}^{(l)}}\Big)\approx0$,
$\exp\Big(\frac{-\mu^2_{\text{U}}}{2({\sigma^{(i)}_{\text{U}}})^2}\Big)\approx0$,  and 
$Q\Big(\frac{-\mu_{\textrm{U}}\sqrt{2}}{2({\sigma^{(i)}_{\text{U}}})^2\sqrt{\Upsilon^{(i,l)}}}\Big)\approx1$. Thus,  $\mathcal{P}_{\textrm{out}}$ in \eqref{eq_sop_high0} is approximated by
\begin{align}
\label{eq_sop_hsnr2}
\mathcal{P}_{\textrm{out}}
&=\sum^L_{l=1}(-1)^{l+1}\binom{L}{l}\sum_{i=1}^3
\frac{w_i\bar{\Gamma}_{\text{U}} \mu_{\textrm{U}}\sqrt{\pi}}{2\rho\lambda_{\text{E}}^{(l)}({\sigma^{(i)}_{\text{U}}})^2\Upsilon^{(i,l)}\sqrt{\Upsilon^{(i,l)}}} \nn\\
&\times\exp\Big(\frac{-\mu^2_{\text{U}}\bar{\Gamma}_{\text{U}}}{2({\sigma^{(i)}_{\text{U}}})^2 \rho\lambda_{\text{E}}^{(l)}
\Upsilon^{(i,l)}}
\Big).
\end{align}
To obtain the SOP directly with respect to the system parameters, we substitute  $\bar{\Gamma}_{\text{U}}=\frac{P\zeta_{\text{SU}}}{N_0}$,  $\lambda_{\text{E}}^{(l)}=N\bar{\Gamma}_{\text{E}}\eta^2/l$, $\mu_{\textrm{U}}= {\eta N\pi}/{4}$,  and $\sigma_{\text{U}}=\eta \sqrt{N({16-\pi^2})/{16}}$ into (\ref{eq_sop_hsnr2}), to get
\begin{align}
\label{eq_sop_hsnr3}
&\mathcal{P}_{\textrm{out}}
=\sum^L_{l=1}(-1)^{l+1}\binom{L}{l}
\sum_{i=1}^3\exp\Bigg(\frac{-N\pi^2}
{16\Big(\frac{\rho}{\frac{l\zeta_{\text{SU}}}{ \zeta_{\text{SE}}}}+\frac{2}{p_i}\big(\frac{16-\pi^2}{16}\big)\Big)}\Bigg)\nn\\
&\times\frac{w_i\lb(\frac{16-\pi^2}{16}\rb)\pi\sqrt{2\pi\rho N p_i}}
{\lb(\frac{\rho}{\frac{l\zeta_{\text{SU}}}{p_i\zeta_{\text{SE}}}}+2\lb(\frac{16-\pi^2}{16}\rb)\rb)
\sqrt{\lb(16-\pi^2\rb)\lb(\rho+\frac{2l\zeta_{\text{SU}}}{p_i\zeta_{\text{SE}}}\lb(\frac{16-\pi^2}{16}\rb)\rb)}}.
\end{align}
%
As $N$ is very large, the exponential factor in \eqref{eq_sop_hsnr3} is very small. As $l$ increases, this factor further decreases; hence, the contribution of the terms from $l=2$ to $l=L$ in the summation in \eqref{eq_sop_hsnr3} is considerably smaller as compared to the contribution of the term $l=1$ alone. Hence, we retain only $l=1$ to obtain the approximate SOP in the high-SNR regime as
\begin{align}
\label{eq_sop_hsnr_final}
\mathcal{P}_{\textrm{out}}
&\approx L
\sum_{i=1}^3
\exp\Bigg(-\frac{N\pi^2}
{16\lb(\frac{\rho}{\frac{\zeta_{\text{SU}}}{ \zeta_{\text{SE}}}}+\frac{2}{p_i}\lb(\frac{16-\pi^2}{16}\rb)\rb)}\Bigg)
\times\nn \\
&\frac{w_i\lb(\frac{16-\pi^2}{16}\rb)\pi\sqrt{2\pi\rho N p_i}}
{\lb(\frac{\rho}{\frac{\zeta_{\text{SU}}}{p_i\zeta_{\text{SE}}}}+2\lb(\frac{16-\pi^2}{16}\rb)\rb)
\sqrt{\lb(16-\pi^2\rb)\lb(\rho+\frac{2\zeta_{\text{SU}}}{p_i\zeta_{\text{SE}}}\lb(\frac{16-\pi^2}{16}\rb)\rb)}}.
\end{align}
The summation over $i$ in  (\ref{eq_sop_hsnr_final}) is due to the  approximation of  $Q$-function used in (\ref{eq_Q_func_sum_3exp1})-(\ref{eq_Q_func_sum_3exp2}).

We can now make several important observations and key insights from (\ref{eq_sop_hsnr_final}), which was not possible earlier from the exact SOP expression.

\begin{remark}
 {In the high-SNR regime, the SOP saturates to a constant value that is independent of $\eta$ and $\xi$.}
\end{remark}
\begin{remark}{It can be observed from \eqref{eq_sop_hsnr_final} that the SOP increases linearly with the number of eavesdroppers in the high-SNR regime.} 
\end{remark}
\begin{remark}{We also find that the SOP is proportional to $\sqrt{N}\exp(-\beta N)$, where $\beta=\frac{\pi^2\frac{\zeta_{\text{SU}}}{\zeta_{\text{SE}}}}
{16\lb(\rho+\frac{2\zeta_{\text{SU}}}{p_i\zeta_{\text{SE}}}\lb(\frac{16-\pi^2}{16}\rb)\rb)}$.
We note that $\beta>0$ and thus for large value of $N$, $\sqrt{N} \exp(-\beta N) = \exp(-(\beta N - 1/2 \ln N)) \approx \exp(-\beta N)$. 
This shows that the SOP decreases exponentially with the number of RIS elements when this number is large.} 
\end{remark}
\color{black}

Now, let us consider how RIS and other node locations affect the SOP performance given by (\ref{eq_sop_hsnr_final}). 
 We notice that the SOP depends on the ratio $\frac{\zeta_{\text{SU}}}{\zeta_{\text{SE}}}$ which is the ratio of the path losses for the indirect links from the source to the users and from the source to the eavesdroppers. This is reasonable as this is the factor that affects the signal strength at the user relative to the signal strength at the eavesdropper, which is essentially the measure of secrecy. 

\begin{remark}
 The ratio $\frac{\zeta_{\text{SU}}}{\zeta_{\text{SE}}}$ can be written
using (\ref{path_loss_angle}) as 
$\frac{\zeta_{\text{SU}}}{\zeta_{\text{SE}}}=\lb(\frac{\cos{\gamma^{(\text{inc})}+\cos{\gamma^\text{(ref)}_{\text{U}}}}}{\cos{\gamma^{(\text{inc})}+\cos{\gamma^\text{(ref)}_{\text{E}}}}}\rb)^2\frac{d_{\text{RE}}^2}{d_{\text{RU}}^2}$. This ratio along with (\ref{eq_sop_hsnr_final}) provides the SOP  performance as a function of the angles $\gamma_\text{inc}$, $\gamma^\text{(ref)}_{\text{U}}$, $\gamma^\text{(ref)}_{\text{E}}$, and the ratio $d_{\text{RE}}/d_{\text{RU}}$, which can be used to determine the optimal placement of the RIS.   
\end{remark}
\color{black}

 In conclusion, it can be stated that the high-SNR closed-form SOP expression in  (\ref{eq_sop_hsnr_final}) provides a useful tool for performance prediction and system design, such as how many elements are required and where to place the RIS for an acceptable SOP performance. This is otherwise difficult to achieve through a very time-consuming simulation method, especially when the number of RIS elements is large.

\subsection{User Scheduling without Eavesdroppers' CSI}
\label{subsec_subopt_usersel_hsnr}
To find the SOP in the high-SNR regime for the SS scheme, we start with the SOP expression derived in Theorem \ref{theorem2} as in the previous section. As $\bar{\Gamma}_{\text{U}} \gg(\rho-1)/\mu_{\textrm{U}}^2$ in the high-SNR regime, (\ref{eq_sop_suboptimal_final1}) does not apply. As $N$ is usually very large,  $\xi\approx1$. Hence, after performing some algebraic manipulations on (\ref{eq_sop_suboptimal_final2}) in the high-SNR regime, we write $\mathcal{P}_{\textrm{out}}=1-\mathcal{P}_1-\mathcal{P}_2-\mathcal{P}_3$
where
\begin{align}
\label{eq_SS_P1}
\mathcal{P}_1&
=1+\sum^L_{l=1}(-1)^{l} \binom{L}{l}e^{-\lb(\frac{\mu^2_{\text{U}}\bar{\Gamma}_{\text{U}}}{\rho\lambda_{\text{E}}^{(l)}}\rb)},\\
\label{eq_SS_P2_first}
\mathcal{P}_2
&=\sum^M_{m=1}\sum_{j=1}^{m}(-1)^{m+j}\binom{M}{m}\binom{m}{j}\nn\\
&\times\Bigg(\sum^L_{l=1}(-1)^{l} \binom{L}{l}\Big(J_+^{(j,l)}-I_+^{(j,l)}\Big)\Bigg)\\
 \label{eq_SS_P3}
\mathcal{P}_3&
=\sum^L_{l=1}(-1)^{l}\binom{L}{l}\Bigg( \sum^M_{m=1} \binom{M}{m}(-1)^{m}I_+^{(m,l)}\Bigg).
\end{align}
It is to be noted that $\mathcal{P}_1$ and $\mathcal{P}_2$ are due to the substitution of $I_-^{(m,l)}$ from (\ref{eq_ss_I_minus_new_theorem}) into (\ref{eq_sop_suboptimal_final2}) and $\mathcal{P}_3$ is due to $I_+^{(m,l)}$ in (\ref{eq_sop_suboptimal_final2}). Specifically, $\mathcal{P}_1$ results from the first part of $I_-^{(m,l)}$ in (\ref{eq_ss_I_minus_new_theorem}) that does not include summation terms and $\mathcal{P}_2$ results from the second part of $I_-^{(m,l)}$ in (\ref{eq_ss_I_minus_new_theorem}) that includes summation terms. $\mathcal{P}_2$ can be further simplified as 
\begin{align}
\label{eq_SS_P2}
\mathcal{P}_2
&=\sum^M_{m=1} \Bigg(\sum^L_{l=1}(-1)^{l} \binom{L}{l}\Big(J_+^{(m,l)}-I_+^{(m,l)}\Big)\Bigg)\nn\\
&\times\Bigg( \sum_{j=m}^{M}(-1)^{m+j} \binom{M}{j}\binom{j}{m}\Bigg)\nn\\
&=\sum^L_{l=1}(-1)^{l} \binom{L}{l}\Big(J_+^{(M,l)}-I_+^{(M,l)}\Big),
\end{align}
where we use that fact that $\sum_{j=m}^{M}(-1)^{m+j} \binom{M}{j}\binom{j}{m}=1$ only when $m=M$ and $j=M$, (otherwise it is zero).  Finally, $\mathcal{P}_{out}$ 
is written as
\begin{align}\label{eq_SS_asym_breakups_T}
\mathcal{P}_{out}&=\sum^L_{l=1}(-1)^{(l+1)} \binom{L}{l}\Big(\mathcal{P}_1^{(l)}+\mathcal{P}_2^{(l)}+\mathcal{P}_3^{(l)}\Big)
\end{align}
where $\mathcal{P}_1^{(l)}=e^{-\big(\frac{\mu^2_{\text{U}}\bar{\Gamma}_{\text{U}}}{\rho\lambda_{\text{E}}^{(l)}}\big)}$, $\mathcal{P}_2^{(l)}=J_+^{(M,l)}-I_+^{(M,l)}$, and $\mathcal{P}_3^{(l)}=\sum^M_{m=1} \binom{M}{m}(-1)^{m}I_+^{(m,l)}$.
We note from (\ref{eq_SS_asym_breakups_T}) that the SOP in the high-SNR regime requires $J_+^{(M,l)}$, $I_+^{(M,l)}$, and $I_+^{(m,l)}$ defined in (\ref{eq_SS_J1}) and (\ref{eq_SS_I1}) in the high-SNR regime. Consequently,  we evaluate $\mathcal{J}_{+}^{(\boldsymbol{k},l)}((\sigma^{(\boldsymbol{k})}_{\text{U}})^2)$ and $\mathcal{I}_{+}^{(\boldsymbol{k},l)}((\sigma^{(\boldsymbol{k})}_{\text{U}})^2)$ in the high-SNR regime.
 The solution of $\mathcal{J}_{+}^{(\boldsymbol{k},l)}((\sigma^{(\boldsymbol{k})}_{\text{U}})^2)$ in the high-SNR regime is already obtained in (\ref{eq_Jplus_closed}). In (\ref{eq_Jplus_closed}), the index $i$ must be replaced with $\boldsymbol{k}$ for the solution. 
Next, $\mathcal{I}_{+}^{(\boldsymbol{k},l)}((\sigma^{(\boldsymbol{k})}_{\text{U}})^2)$ defined in (\ref{eq_SS_I1_appendix2}) is adapted to the high-SNR regime with the approximation of $\bar{\Gamma}_{\text{U}}\gg(\rho-1)/\mu^2_{\text{U}}$ and $(\rho-1)/\bar{\Gamma}_\text{U}\approx 0$ due to $\bar{\Gamma}_\text{U}\approx\infty$ as
\begin{align}
&\label{eq_Iplus_high_SNR}
\mathcal{I}_{+}^{(\boldsymbol{k},l)}((\sigma^{(\boldsymbol{k})}_{\text{U}})^2)
=
\int_{\frac{\mu^2_{\text{U}}\bar{\Gamma}_{\text{U}}}{\rho}}^{\infty}\exp\Bigg(-\frac{ \Big({\sqrt{\frac{\rho x}{\bar{\Gamma}_{\text{U}}}}-\mu_{\textrm{U}}}\Big)^2}{2(\sigma^{(\boldsymbol{k})}_{\text{U}})^2}\Bigg)\nn\\
&\times \hat{f}_{\Gamma_{\text{E}}^{(l)}}(x)dx\\
\label{eq_Iplus_high_SNR_closed}
&=\frac{\bar{\Gamma}_{\text{U}}}{2\rho\lambda_{\text{E}}^{(l)}\Upsilon^{(\boldsymbol{k},l)}}
\Bigg[\exp\Big(\frac{-\mu^2_{\text{U}} \bar{\Gamma}_{\text{U}}}{\rho\lambda_{\text{E}}^{(l)}}\Big)+\frac{\mu_{\textrm{U}}\sqrt{\pi}}{({\sigma^{(\boldsymbol{k})}_{\text{U}}})^2\sqrt{\Upsilon^{(\boldsymbol{k},l)}}} \nn\\
&\times\exp\Big(\frac{-\mu^2_{\text{U}} \bar{\Gamma}_{\text{U}}}
{2({\sigma^{(\boldsymbol{k})}_{\text{U}}})^2\rho\lambda_{\text{E}}^{(l)}\Upsilon^{(\boldsymbol{k},l)}}\Big)Q\Big(\frac{\sqrt{2}\mu_{\textrm{U}} \bar{\Gamma}_{\text{U}}}{\rho\lambda_{\text{E}}^{(l)}\sqrt{\Upsilon^{(\boldsymbol{k},l)}}
}\Big)
\Bigg].
\end{align}
The proof of the solution in (\ref{eq_Iplus_high_SNR_closed}) is provided in  Appendix \ref{appendix_proof_I_plus}. 




In (\ref{eq_SS_asym_breakups_T}) as $M$ increases,  both $J_+^{(M,l)}$ and  $I_+^{(M,l)}$ decrease in $\mathcal{P}_2^{(l)}$.
This can be verified from the definitions of $J_+^{(m,l)}$ and  $I_+^{(m,l)}$ in (\ref{eq_J1_subopt}) and (\ref{eq_I1_subopt}), respectively, when $m=M$. $J_+^{(m,l)}$ and  $I_+^{(m,l)}$ are nothing but averaging $(Q(\cdot))^m$ after applying the approximation to $Q$-function from \eqref{eq_Q_func_sum_3exp1}. 
Thus, the values of both $J_+^{(m,l)}$ and  $I_+^{(m,l)}$ are positive and less than unity when $m=1$. When $m=M$, the values of $J_+^{(M,l)}$ and  $I_+^{(M,l)}$ decrease with $M$ as $m$ is in the exponent of $Q$-function. In contrast, the absolute value of $\mathcal{P}_3^{(l)}$ 
increases with  $M$ in (\ref{eq_SS_asym_breakups_T}).
The term $\mathcal{P}_3^{(l)}$ is nothing but averaging $((1-Q(\cdot))^M-1)$
over a certain region, as can be verified from (\ref{eq_I1_subopt}). As $(1-Q(\cdot))^M$ decreases as $M$ increases, the absolute value of $\mathcal{P}_3^{(l)}$ increases. Here we note that $\mathcal{P}_3^{(l)}$ is negative as $\lb((1-Q(\cdot))^M-1\rb)$ is negative. By comparing $\mathcal{P}_2^{(l)}$ and $\mathcal{P}_3^{(l)}$, we ignore $\mathcal{P}_2^{(l)}$ in \eqref{eq_SS_asym_breakups_T} since its  contribution is negligible as compared to $\mathcal{P}_3^{(l)}$ when $M$ is large. Therefore, we obtain
\begin{align}
\label{eq_SS_approx_hsnr}
\mathcal{P}_{\textrm{out}}
&=\sum^L_{l=1}(-1)^{(l+1)}  \binom{L}{l}\Bigg[e^{-\lb(\frac{\mu^2_{\text{U}}\bar{\Gamma}_{\text{U}}}{\rho\lambda_{\text{E}}^{(l)}}\rb)}\nn\\
&+\sum^M_{m=1} \binom{M}{m}(-1)^{m}I_+^{(m,l)}\Bigg].
\end{align}
Next, by taking the dominant term $l=1$ in the summation series of (\ref{eq_SS_approx_hsnr}) due to the same reasoning provided in the case of the single-user scenario in (\ref{eq_sop_hsnr_final}), we obtain  the SOP expression in the high-SNR regime as \begin{align}
   \label{eq_SS_asymp}
 \mathcal{P}_{\textrm{out}}\approx L\Bigg[ \exp{\lb(-\frac{\mu^2_{\text{U}}\bar{\Gamma}_{\text{U}}}{\rho{\lambda_{\text{E}}^{(1)}}}\rb)}+\sum^M_{m=1}(-1)^{m}\binom{M}{m} I_+^{(m,1)}\Bigg].
\end{align}
 By substituting the high-SNR version of $I_+^{(m,1)}$ with the help of  (\ref{eq_Iplus_high_SNR_closed}), 
  $\bar{\Gamma}_{\text{U}}=\frac{P\zeta_{\text{SU}}}{N_0}$, ${\lambda_{\text{E}}^{(1)}}=\eta^2N\bar{\Gamma}_{\text{E}}$, $\mu_{\textrm{U}}= {\eta N\pi}/{4}$, and $\sigma_{\text{U}}=\eta \sqrt{N\lb({16-\pi^2}\rb)/{16}}$   in (\ref{eq_SS_asymp}); and thereafter performing some algebraic manipulations by replacing the $Q$-function  with its approximation in (\ref{eq_Q_func_sum_3exp1}), we obtain the SOP expression in the high-SNR regime in terms of the basic system parameters as
 \begin{align}\label{eq_SS_asymp_final}
&\mathcal{P}_{\textrm{out}}\approx L\Bigg[\Bigg(1
+\sum^M_{m=1}
\frac{W^{(m)}\mathcal{A} \frac{\zeta_{\text{SU}}}{\zeta_{\text{SE}}}}
{\lb(\rho+2\mathcal{A} \frac{\zeta_{\text{SU}}}{\zeta_{\text{SE}}}\rb)}
 \Bigg)\exp{\lb(-\frac{N\pi^2\zeta_{\text{SU}}}{16\rho \zeta_{\text{SE}}}\rb)} \nn\\
 &+\sum^M_{m=1}\sum_{i=1}^3
\frac{W^{(m)}\mathcal{A} w_i \frac{\zeta_{\text{SU}}}{\zeta_{\text{SE}}}}
{2\lb(\rho+2\mathcal{A} \frac{\zeta_{\text{SU}}}{\zeta_{\text{SE}}}\rb)}\frac{\pi\sqrt{2\pi\rho N \sum^3_{i=1}k_ip_i}}
{\sqrt{\lb(16-\pi^2\rb)\lb(\rho+2\mathcal{A} \frac{\zeta_{\text{SU}}}{\zeta_{\text{SE}}}\rb)}}\nn\\
&\times \exp\Bigg(-\frac{N\pi^2\frac{\zeta_{\text{SU}}}{\zeta_{\text{SE}} }\lb(\frac{1}{2}+p_i\mathcal{A} \frac{\zeta_{\text{SU}}}{\rho\zeta_{\text{SE}}}\rb)}
{8\lb(\rho+2\mathcal{A} \frac{\zeta_{\text{SU}}}{\zeta_{\text{SE}}}\rb)}\Bigg)\Bigg],
\end{align}
where $W_m=(-1)^{m}\binom{M}{m}  \sum_{\boldsymbol{k} \in \mathcal{S}_m}\binom{m}{\boldsymbol{k}}
\frac{w_1^{k_1}w_2^{k_2}w_3^{k_3}}{2^{k_1+k_2+k_3-1}}$
and  $\mathcal{A}=\big({\frac{16-\pi^2}{16}}\big)\big{/}{\sum^3_{i=1} k_ip_i}$.


We now deduce the behavior of the SOP with the system parameters.  We observe from  (\ref{eq_SS_asymp}) that the first term is independent of $M$, and the second term depends on $M$. The second term decreases with $M$ as mentioned earlier when we neglected $\mathcal{P}_3^{(l)}$ in (\ref{eq_SS_asym_breakups_T}). 
Therefore, the SS scheme will benefit from the increasing number of users.
{
\begin{remark}
It can be shown that the asymptotic SOP  is a decreasing function of $M$. 
Therefore, the secrecy performance improves with an increasing number of users.
\end{remark}}
We also notice from (\ref{eq_SS_asymp_final}) that the SOP in the high-SNR regime increases linearly with $L$ and is independent of $\eta$ and $\xi$.
From (\ref{eq_SS_asymp_final}), we conclude that the SOP saturates in the high-SNR regime.

{
\begin{remark}
If we consider the SOP performance with $N$, we observe that the first term in (\ref{eq_SS_asymp_final}) is proportional to $\exp(-\beta_1N)$ and the second term is proportional to $\sqrt{N}\exp(-\beta_2N)$, where $\beta_1=\frac{\pi^2\zeta_{\text{SU}}}{16\rho \zeta_{\text{SE}}}>1$ and $\beta_2=\frac{\pi^2\frac{\zeta_{\text{SU}}}{\zeta_{\text{SE}} }\lb(\frac{1}{2}+p_i\mathcal{A} \frac{\zeta_{\text{SU}}}{\rho\zeta_{\text{SE}}}\rb)}
{8\lb(\rho+2\mathcal{A} \frac{\zeta_{\text{SU}}}{\zeta_{\text{SE}}}\rb)}>1$. For large values of $N$, $\sqrt{N} \exp(-\beta_2 N) = \exp(-(\beta_2 N - 1/2 \ln N)) \approx \exp(-\beta_2 N)$. Hence, we conclude that both of the terms in \eqref{eq_SS_asymp_final} decrease exponentially with the number of RIS elements when this number is large, as was explored for the single-user case in Section \ref{subsec_single_user_mult_eve_hs}.
\end{remark} 
The reasoning for the above {Remark} is also similar to that in the single-user scenario.} 
The behavior with $N$, $L$, $P/N_0$, and path loss is similar to the single-user scenario.
{
\begin{remark}
The   SOP in the high-SNR regime is a monotonically decreasing function of $\frac{\zeta_{\text{SU}}}{\zeta_{\text{SE}}}$ and its behavior is similar to that in the single-user case; hence, the effect of the RIS location is the same as in the single-user case.      
\end{remark}
}


In the high-SNR regime, the SOP expression derived for the single-user case is not the special case of the SOP expression derived for the SS scheme, though the SOP of the single-user case is a special case of the SOP of the SS scheme.  We arrive at the single-user and multi-user SOP expressions by adopting different approximations from a general SOP expression of the SS scheme in the high-SNR regime. In particular, when $M=1$, $I_+^{(l)}=I_+^{(1,l)}$ is negligible as compared to $J_+^{(l)}=J_+^{(1,l)}$ in (\ref{eq_sop_high0}), however, as $M$ increases, $\mathcal{P}_2^{(l)}$ becomes negligible as compared to $\mathcal{P}_3^{(l)}$ in (\ref{eq_SS_approx_hsnr}).

\subsection{User Scheduling with Eavesdroppers' CSI}
\label{subsec_opt_hsnr}
The SOP of the OS scheme in the high-SNR regime is obtained with the help of the relationship between the SOP of the OS scheme and the SOP of the single-user system provided in (\ref{eq_SOP_optimal}) using the single-user approximate SOP in  (\ref{eq_sop_hsnr_final}). 
{
\begin{remark}
    As $M$ appears in the exponent of (\ref{eq_SOP_optimal}) for the SOP of the OS scheme, we conclude that the SOP in the OS scheme decreases at a rate of $\exp(-MN)$ with  $M$ and $N$ and increases with the increase in $L$ at a rate of $L^M$ in the high-SNR regime when $N$ and $M$ are large. 
\end{remark}
} 



\section{DF Relay-aided Scheduling} \label{sec_relay_based}
In this section, we compare the RIS-aided scheduling with the DF relay-aided scheduling.
We derive the SOP of a system where a relay decodes the received signal in the first time slot, and re-encodes and forwards it to the users in the next time slot. 
R denotes the relay instead of the RIS in this section.  
We consider two situations with relay-aided scheduling: i)  without direct links and ii)  with direct links between the source and the users. The direct links are incorporated in the relaying to assess how many RIS elements are required to beat the relaying system with an advantage of the direct links even though the RIS-aided system does not have that advantage. 
In both cases, the user is selected as in the RIS-aided SS scheme, i.e., based on the relay to user channel SNRs only. In case ii), we assume maximum ratio combining (MRC) at the scheduled user after two time slots.

As the rate of a dual-hop link with a DF relay is limited by the rate of the worst hop, the SNR for the scheduled user U when the  direct link is present is given by
\begin{align}
\label{eq_snr_df}
\Gamma_{\text{U}}
=\frac{P_1\lvert {h_{\text{SU}}} \rvert^2}{N_0}+\min\Bigg(\frac{P_1\lvert {h_{\text{SR}}} \rvert^2}{N_0},  \frac{P_2\max\limits_{m\in\mathcal{M}}\lb\{\lvert  {h}_{\text{RU}}^{(m)}\rvert^2\rb\}}{N_0}\Bigg),
\end{align}
where $P_1$ and $P_2$ are the transmit powers of S and R, respectively,  {${h}_{\text{XY}}=\zeta_{\text{XY}}\tilde{h}_{\text{SU}}$ is the direct channel between the nodes X and Y where $\text{XY}\in\{\text{SU}, \text{SR}, \text{RU}\}$,  $\tilde{h}_{\text{XY}}$ is a complex Gaussian RV with zero mean and unit variance,  $\zeta_{\text{XY}} = ((4\pi/\lambda)^2d_{\text{XY}}^{\upsilon})^{-1}$ is the path loss between the nodes X and Y, and $\upsilon$ is the path loss exponent.
We note that the path loss model assumed in the relay-aided system is the free space path loss for the direct link between nodes, which is different to the free space path loss for the indirect links considered in the RIS-aided system \cite{nemanja_TWC}.
The definition of the channels S-R, R-$\text{E}^{(l)}$ for each $l$,  and R-$\text{U}^{(m)}$ for each $m$ are the same as in the RIS-aided system.}
We assume the same average transmit power for the RIS and relay-aided systems, i.e., $P={(P_1+P_2)}/{2}$.
The achievable SNR for the eavesdropping channel is
 $\Gamma_{\text{E}}
={P_2\max_{l\in\mathcal{L}}\{\lvert {h}_{\text{RE}}^{(l)}\rvert^2\}}/{N_0}$  
where ${h}_{\text{RE}}^{(l)}=\zeta_{\text{RE}} \tilde{h}_{\text{RE}}^{(l)}$
is the direct channel between the nodes  $\text{R}$ and $\text{E}^{(l)}$, $\tilde{h}_{\text{RE}}^{(l)}$ is a complex Gaussian RV  with zero mean and unit variance, and $\zeta_{\text{RE}} = ((4\pi/\lambda)^2d_{\text{RE}}^{\upsilon})^{-1}$ is the path loss for the distance $d_{\text{RE}}$ between the nodes R and $\text{E}^{(l)}$ \cite{kundu_dual_hop_regenerative}.

To obtain the SOP, we consider the definition of the SOP for DF relaying from \cite{kundu_dual_hop_regenerative} and derive it including the direct link, multiple eavesdroppers, and user scheduling as per the system assumed here. We obtain the distributions of $\Gamma_{\text{U}}$ and $\Gamma_{\text{E}}$ to find the SOP using (\ref{eq_exact_CLT_sop_no_sel_mult_eve}). The CDF of $\Gamma_{\text{U}}$ is obtained as
\begin{align}
\label{eq_cdf_gamma_SRD}
&F_{\Gamma_{\text{U}} }(x)
=1-\exp\Big(-\frac{x}{\lambda_{\text{SU}}}\Big)-\sum^M_{m=1}\frac{(-1)^{m+1}\binom{M}{m}}{1-\frac{\lambda_{\text{SU}}}{\lambda_{\text{SR}}}-\frac{\lambda_{\text{SU}}}{\lambda_{\text{RU}}^{(m)}}}\nn\\
&\times\Bigg[\exp\Big(-x\Big(\frac{1}{\lambda_{\text{SR}}}+\frac{1}{\lambda_{\text{RU}}^{(m)}}\Big)\Big)-\exp\Big(-\frac{x}{\lambda_{\text{SU}}}\Big)\Bigg],
\end{align}
where $\lambda_{\text{SU}}=  2\alpha\Gamma_0\zeta_{\text{SU}}$, $\lambda_{\text{SR}}=  2\alpha\Gamma_0\zeta_{\text{SR}}$, $\lambda_{\text{RU}}= 2(1-\alpha)\Gamma_0\zeta_{\text{RU}}$, $\lambda_{\text{RU}}^{(m)}=\lambda_{\text{RU}}/m$, and $\Gamma_0=P/N_0$ with the power allocation factors for the source and the relay $\alpha$ and $(1-\alpha)$, respectively, wherein $0<\alpha<1$. 
The PDF of $\Gamma_{\text{E}}$, is the same as  (\ref{eq_pdf_E}) with  $\lambda_{\text{E}}= 2(1-\alpha)\Gamma_0\zeta_{\text{RE}}$   and $\lambda_{\text{E}}^{(l)}=\lambda_{\text{E}}/l$. 
By substituting \eqref{eq_cdf_gamma_SRD} and (\ref{eq_pdf_E}) in \eqref{eq_exact_CLT_sop_no_sel_mult_eve}, we obtain the SOP as 
\begin{align}
\label{eq_sop_single_rel_integral}
\mathcal{P}_{\textrm{out}}
&=1-\sum^M_{m=1}\sum^L_{l=1}\binom{M}{m}\binom{L}{l}\frac{(-1)^{m+l}}{\lambda_{\text{RE}}^{(l)}}\Bigg[\frac{\exp\lb(-\frac{\lb(\rho-1\rb)}{\lambda_{\text{SU}}} \rb)}{\frac{\rho}{\lambda_{\text{SU}}}+\frac{1}{\lambda_{\text{RE}}^{(l)}}}
\nn\\
&+
\frac{\exp\Big(-\lb(\rho-1\rb)\lb(\frac{1}{\lambda_{\text{SR}}}+\frac{1}{\lambda_{\text{RU}}^{(m)}} \rb)\Big)}{\lb(1-\frac{\lambda_{\text{SU}}}{\lambda_{\text{SR}}}-\frac{\lambda_{\text{SU}}}{\lambda_{\text{RU}}^{(m)}}\rb)\lb(\frac{\rho}{\lambda_{\text{SR}}}-\frac{\rho}{\lambda_{\text{RU}}^{(m)}}-\frac{1}{\lambda_{\text{RE}}^{(l)}}\rb)} \Bigg],
\end{align}
where $\rho=2^{2R_{\textrm{th}}}$ in this section. Note that the exponent $2R_{\textrm{th}}$ is due to the two time slots required for the relaying. The SOP of the relay-aided system without the direct link can be obtained easily following the similar approach described in this section. 



\section{Numerical Results}\label{sec_result}

In this section, we present the analytical SOP using Theorem 1, the numerical integration (NI) of the SOP expressions in the integral form corresponding to the SS scheme and the OS scheme in \eqref{eq_SS_SOP2} and (\ref{eq_SOP_optimal}), respectively,  and the simulated results to validate the correctness of our derived approximate closed-form solutions.  The color red with a marker `$\times$' indicates the simulation results. Dashed horizontal lines represent the SOP in the high-SNR regime. 
It is to be noted that all the high-SNR horizontal lines are not in the range of the SOP and SNRs shown.
For numerical results, we assume $R_{\textrm{th}}=1$ bit per channel use (bpcu), and $N_0=-110$ dB \cite{nemanja_TWC}.
The other parameters for a particular figure are shown in the corresponding caption. In the figures, we denote the single-user single-eavesdropper system and single-user multiple-eavesdropper system as SE and  ME, respectively, which are the special cases of the SS and OS schemes for the proposed multi-user multiple-eavesdropper system. 
Relay-aided systems with and without a direct link from the source to the users are denoted as R-DL and R-NDL, respectively.  

\begin{figure}[ht]
\centering
\includegraphics[width=0.485\textwidth]{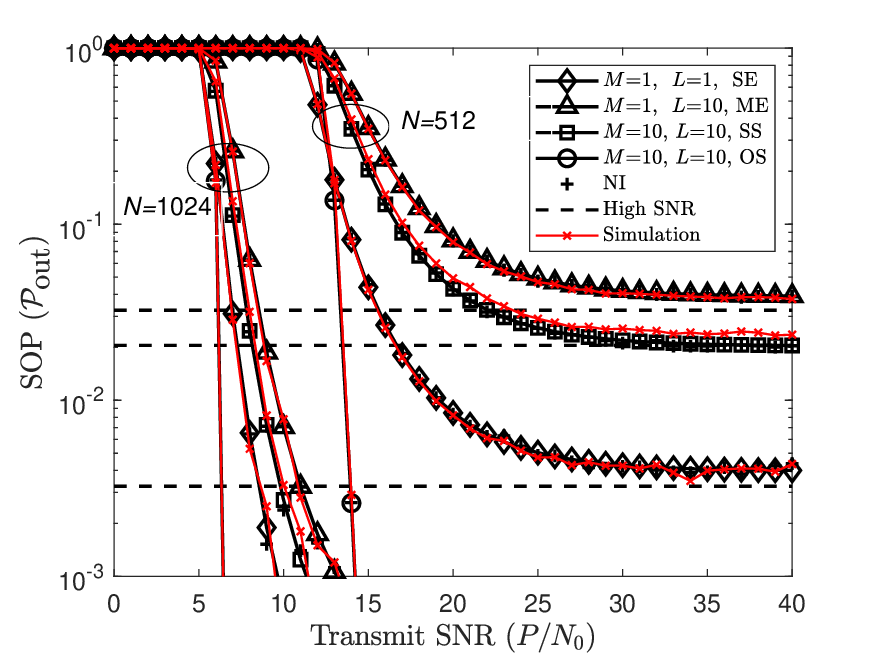}
\caption{{SOP vs. $P/N_0$ by varying $N=\{512, 1024\}$  when  $f=2$ GHz, $\{(M, L)\}=\{(1,1), (1, 10), (10,10)\}$,   $\delta_{\text{SU}}=900$ m,  $\delta_{\text{SE}}=300$ m, and $\delta_{\text{SR}} = \delta_{\text{RS}} = \delta_{\text{RU}} = \delta_{\text{RE}} = 200$ m.}}
\label{FIG2_vary_N_64_128}
\end{figure}

\subsection{Comparison of SE, ME, SS, and OS schemes}
Fig. \ref{FIG2_vary_N_64_128} plots the SOP versus the transmit SNR $P/N_0$ for the SE, ME, SS, and OS schemes for different numbers of RIS elements. The results for the SE, ME, and SS schemes are obtained using (\ref{eq_sop_suboptimal_final}), and the OS scheme is obtained using (\ref{eq_SOP_optimal}).  Though the analytical results are approximate results, these match well with the simulation due to large $N$, since CLT improves performance as the number of RVs increases.
This validates our analytical methodology. 
%
{
\begin{remark}
   We observe that as the number of RIS elements increases, the performance gap between the SS and OS schemes decreases in the SNR range where the SOP is not saturated. For example, the SNR gain attained by the OS scheme over the SS scheme is nearly 8 dB for $N=512$ and 2 dB for $N=1024$ at an SOP of $0.03$
   when  $M=10$ and $L=10$. This shows that if $N$ is large,
employing the SS scheme does not result in significant performance loss compared to
the OS scheme. 
\end{remark}
}
This is the advantage of the SS scheme as it does not require the CSI of the eavesdroppers; still, it closely follows the performance of the OS scheme if $N$ is large.



We notice that as $P/N_0$ increases, the SOP saturates to a constant level. This agrees with the finding of the SOP analysis in the high-SNR regime. The horizontal lines match well with the approximate analytical solutions at high $P/N_0$, confirming the correctness of our SOP analysis in the high-SNR regime.  The benefit of the scheduling schemes is that they can significantly reduce the saturation level.
{
\begin{remark}
The OS scheme benefits the most from an increase in $M$ and $N$ as these parameters yield a significant performance improvement compared to the SS scheme. For example, as $M$ increases from 1 to 10, the performance gap between the OS scheme and the ME scheme is greater than the performance gap between the SS scheme and the ME scheme in the saturation region.    
\end{remark}
}  This finding is consistent with the  SOP analysis in the high-SNR regime, where it is found that the performance in the OS and SS schemes is proportional to $\exp(-MN)$ and $\exp(-N)$, respectively.

\subsection{Comparison of opportunistic user scheduling with NOMA-based scheduling}
In this section, we compare the RIS-aided user scheduling schemes with the RIS-aided NOMA-based scheduling scheme. 
In the NOMA-based scheduling, a pair of users is selected independently to schedule in each time slot.
One of the users is the best user among all users, selected in a similar manner as in the SS scheme, whereas the other one is the worst user, given that the phases of the RIS elements are aligned in favor of the best user \cite{Ding_two_user_NOMA, ding_cooperative_noma_letter,Nallanathan_NOMA_RIS}.
We note that the simultaneous phase alignment of the RIS elements towards each user is difficult when a NOMA  scheme is employed \cite{NOMA_secrecy_RIS_GC_21,noma_secrecy_ISR_MIMO}, which may compromise the secrecy performance of the NOMA system. The power allocation between the best and the worst user is implemented by maximizing the sum rate of the selected pair.
In the NOMA system, we consider the \textit{sum secrecy rate outage probablity} (also shortened to SOP), defined as the probability that the sum of the secrecy rates of the two selected NOMA users falls below $R_{\textrm{th}}$. This ensures a fair comparison as both systems aim to deliver the same target sum secrecy rate. As all users are assumed to have identical channel characteristics, both scheduling schemes achieve user fairness because each user will be served, on average, with the same overall secrecy rate.

\begin{figure}[ht]
\centering
\includegraphics[width=0.485\textwidth]{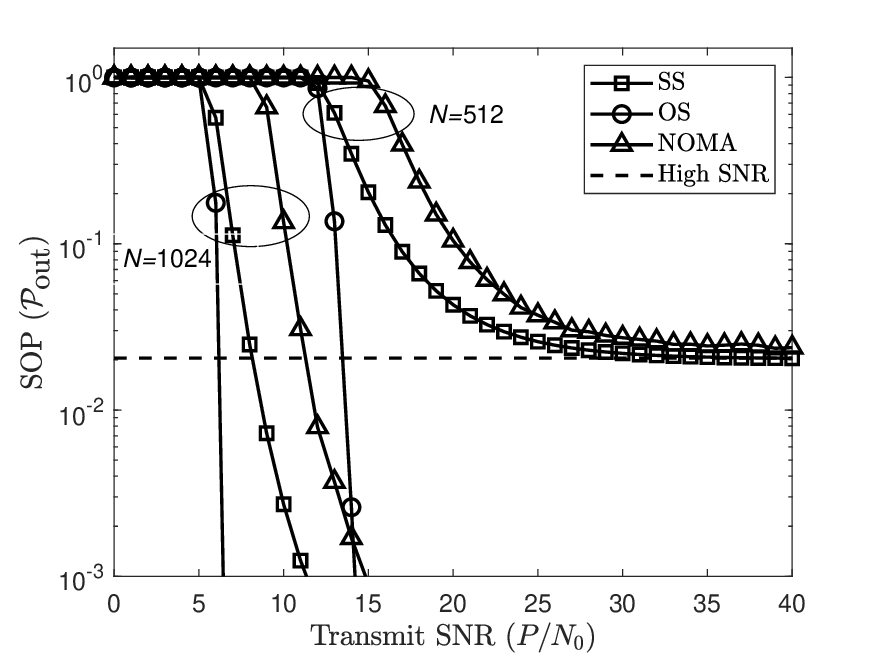}
\caption{Comparison of opportunistic scheduling and NOMA-based scheduling by varying $N=\{512, 1024\}$  when   $f=2$ GHz, $\{(M, L)\}=\{ (10,10)\}$,   $\delta_{\text{SU}}=900$ m,  $\delta_{\text{SE}}=300$ m, and $\delta_{\text{SR}} = \delta_{\text{RS}} = \delta_{\text{RU}} = \delta_{\text{RE}} = 200$ m.}
\label{FIG_vary_N_64_128_NOMA}
\end{figure} 
In Fig. \ref{FIG_vary_N_64_128_NOMA}, the SOP comparison for the SS, OS, and NOMA-based scheduling schemes is shown. 
We observe that the OS scheme always performs the best. 
We notice that the NOMA-based scheduling scheme performs worse than both the SS and the OS schemes. 
The performance of the NOMA-based scheduling scheme gradually approaches the SS scheme in the high SNR regime.
This shows that the opportunistic scheduling schemes considered in this work outperform the specific NOMA-based scheduling scheme in which the phases of the RIS elements are aligned for the best user and as a consequence, misaligned in the case of the worst user.

\subsection{Comparison of RIS-aided and DF relay-aided systems}

\begin{figure}
\centering
\includegraphics[width=0.485\textwidth]{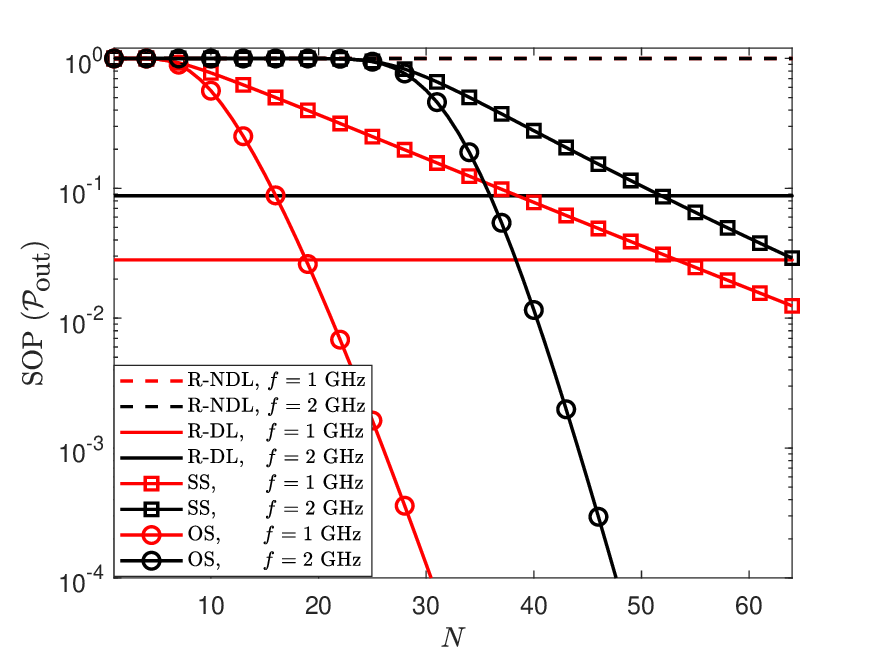}
\caption{{SOP vs. $N$ for the RIS and DF relay-aided systems by varying $f=\{1,2\}$ GHz  when $P/N_0=20$ dB, $M=10$, $L=3$,   $\delta_{\text{SR}}=40$ m, $\delta_{\text{SU}}=200$ m, $\delta_{\text{SE}}=125$ m, $\delta_{\text{RS}} = \delta_{\text{RU}} = \delta_{\text{RE}} = 30$ m,   and  $\upsilon = 3$.}}
\label{FIG_vary_PLratio_N_M10}
\end{figure} 
Fig. \ref{FIG_vary_PLratio_N_M10} shows the SOP performance comparison of the RIS-aided (SS and OS) and the DF relay-aided (R-NDL and R-DL) systems with respect to $N$ for different frequencies $f=\{1, 2\}$ GHz. The horizontal lines in the figure are the SOP of the relay-aided systems. The SOP as a function of $N$ provides a design guideline of how to choose between the RIS-aided system and the relay-aided system. The figure provides a way to find an appropriate number of RIS elements to outperform the relay-aided system for specific system parameters.
{
\begin{remark}
The RIS-aided system outperforms the relay-aided system if $N$ is sufficiently large. For instance, the RIS-aided system with SS scheme requires approximately 53 elements to outperform the R-DL system when  $f=1$GHz; however, it requires only a few elements to beat the corresponding R-NDL system.   
\end{remark}
} We also observe that as frequency increases, the performance of both the RIS and relay-aided systems degrade. 
Although higher frequency leads to increased path loss in both RIS and relay-aided systems,  the effect of increased path loss can be compensated in the RIS-aided system by increasing the number of elements.

\begin{figure}
\centering
\includegraphics[width=0.485\textwidth]{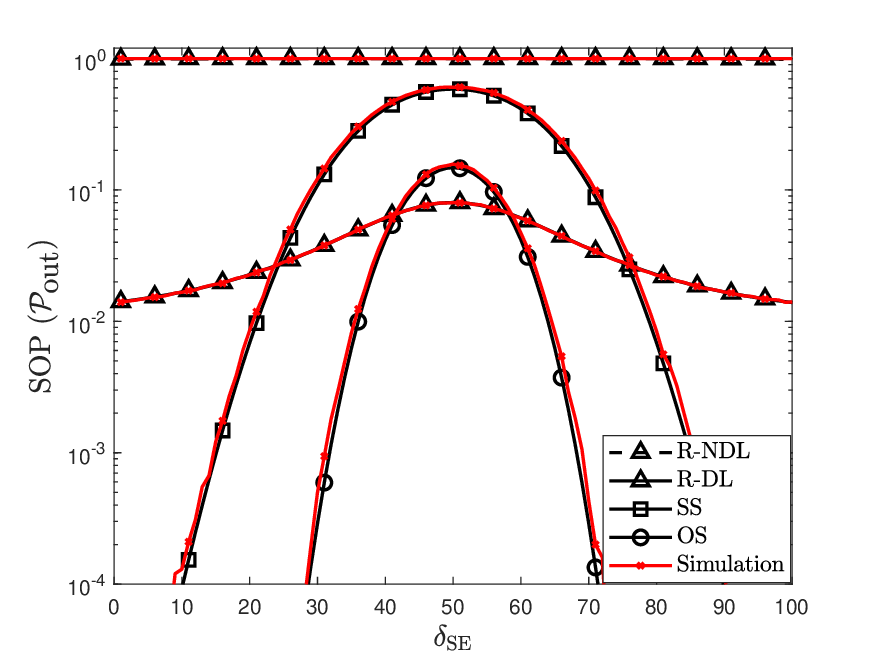}
\caption{{SOP vs. $\delta_{\text{SE}}$ for the RIS and DF relay-aided systems when $f=2$ GHz,  $N =64$,  $M=10$, $L=3$,   $\delta_{\text{SR}}=50$ m, $\delta_{\text{RS}}=\delta_{\text{RU}}=70$ m, $\delta_{\text{RE}}=20$ m,   and $\upsilon=3$.}}
\label{FIG4b_SOP_v_Delta_SE}
\end{figure}
Fig. \ref{FIG4b_SOP_v_Delta_SE}  provides an SOP performance comparison between the RIS-aided (ME, SS, and OS) and the DF relay-aided (R-NDL and R-DL) systems by varying the distance $\delta_{\text{SE}}$ between the planes containing S and $\text{E}^{(l)}$.
It can be observed from   Fig. \ref{FIG4b_SOP_v_Delta_SE} that the SOP of the RIS-aided system increases with increasing $\delta_{\text{SE}}$ until  $\delta_{\text{SE}} =\delta_{\text{SR}}$, thereafter.
This is due to the position of the RIS in the system; the secrecy performance degrades as the eavesdroppers move closer to the RIS. It is also observed that the relay-aided  (R-DL and R-NDL) systems follow a similar trend as those of the RIS-aided  (SS and OS) systems. While comparing the RIS and relay-aided systems, we notice that among the DF relay-aided systems, only  R-DL is comparable with the RIS-aided system, as the R-NDL is always outperformed. The  RIS-aided system always outperforms the corresponding R-NDL system. 
\color{black}

\section{Conclusion} 
We derive the SOP for the opportunistic user scheduling schemes (suboptimal and optimal) that incorporate multiple users and multiple eavesdroppers {in a single antenna} RIS-aided system. 
{The SOP of the best antenna-user pair scheduling scheme in the multiple antenna case can be directly achieved from the SOP of the suboptimal scheduling scheme in the single antenna system. } 
A realistic path loss model is considered in the analysis by incorporating frequency, distances, and angles of incidence and reflection at the RIS.
The closed-form approximate SOP and its simplified high-SNR expression are derived for each scheduling scheme.
Though the performance of the SS scheme is worse than that of the OS scheme, the SS scheme does not require eavesdropping CSI; however, its performance is close to the OS scheme when the number of RIS elements is large.
In the high-SNR regime, the  SOP saturates to a constant level depending on the ratio of the path loss of the source-to-users and source-to-eavesdroppers indirect links.
The high-SNR expressions can be used to decide optimal RIS placement. These expressions confirm that the SOP decreases exponentially with the number of elements and with the product of the number of elements and users in the SS scheme and in the OS scheme, respectively.
It also shows that the SOP increases linearly with the number of eavesdroppers in the SS scheme and increases with the number of eavesdroppers raised to power the number of users in the OS scheme.
A performance comparison of opportunistic schemes with a specific NOMA-based scheme reveals that the opportunistic schemes outperform the NOMA-based scheme.
Finally, it is concluded that the RIS-aided system outperforms the DF relay-aided system only if the RIS has a sufficient number of elements. This critical number depends on the frequency of operation. 

\appendix
\subsection{The solution of $\mathcal{J}_{+}^{(\boldsymbol{k},l)}((\sigma^{(\boldsymbol{k})}_{\text{U}})^2)$ given by (\ref{eq_I1_prime_subopt_ext})}\label{appendix_Jplus_SS}
We manipulate (\ref{eq_I1_prime_subopt_ext})  by absorbing the two exponential functions in it into one and 
completing the square on the argument to obtain
\begin{align}
\label{eq_J1_lemma3_addition}
\mathcal{J}_{+}^{(\boldsymbol{k},l)}((\sigma^{(\boldsymbol{k})}_{\text{U}})^2)&=\frac{1}{2\lambda_{\text{E}}^{(l)}}\exp\Bigg(\frac{\rho-1}{\rho\lambda_{\text{E}}^{(l)}}-\frac{\mu^2_{\text{U}}\bar{\Gamma}_{\text{U}}}{2(\sigma^{(\boldsymbol{k})}_{\text{U}})^2\rho\lambda_{\text{E}}^{(l)}\Upsilon^{(\boldsymbol{k},l)}}\Bigg) \nn \\
&\times\int_{0}^{\infty}
\exp\Bigg(-\Upsilon^{(\boldsymbol{k},l)}\Bigg(\sqrt{\frac{\rho-1+\rho x}{\bar{\Gamma}_{\text{U}}}}\nn\\
&-\frac{\mu_{\textrm{U}}}{2(\sigma^{(\boldsymbol{k})}_{\text{U}})^2\Upsilon^{(\boldsymbol{k},l)}}\Bigg)^2\Bigg)dx,
\end{align}
where
$\Upsilon^{(\boldsymbol{k},l)}= \frac{1}{2(\sigma^{(\boldsymbol{k})}_{\text{U}})^2}+\frac{\bar{\Gamma}_{\text{U}}}{\rho\lambda_{\text{E}}^{(l)}}$. 
Next, after making the change of variable $t=\sqrt{\frac{\rho-1+\rho x}{\bar{\Gamma}_{\text{U}}}}-\frac{\mu_{\textrm{U}}}{2(\sigma^{(\boldsymbol{k})}_{\text{U}})^2
\Upsilon^{(\boldsymbol{k},l)}}$ and performing some algebraic manipulations, we obtain
\begin{align}
\label{eq_def_J1_correction}
\mathcal{J}_{+}^{(\boldsymbol{k},l)}((\sigma^{(\boldsymbol{k})}_{\text{U}})^2)
&=\frac{ \bar{\Gamma}_{\text{U}}}{\rho\lambda_{\text{E}}^{(l)}}
\exp\Bigg(\frac{ \rho-1}{\rho\lambda_{\text{E}}^{(l)}}-\frac{ \mu^2_{\text{U}}\bar{\Gamma}_{\text{U}}}
{2(\sigma^{(\boldsymbol{k})}_{\text{U}})^2\rho\lambda_{\text{E}}^{(l)}\Upsilon^{(\boldsymbol{k},l)}}\Bigg)\nn\\
&\times\int_{\sqrt{\frac{\rho-1}{\bar{\Gamma}_{\text{U}}}}-\frac{\mu_{\textrm{U}}}{2(\sigma^{(\boldsymbol{k})}_{\text{U}})^2\Upsilon^{(\boldsymbol{k},l)}}}
^{\infty}
\Big(t+\frac{\mu_{\textrm{U}}}{2(\sigma^{(\boldsymbol{k})}_{\text{U}})^2\Upsilon^{(\boldsymbol{k},l)}}\Big)\nn\\
&\times\exp\lb(-\Upsilon^{(\boldsymbol{k},l)}t^2\rb)
 dt.
\end{align}
The solution of the integral in (\ref{eq_def_J1_correction}) is obtained from \cite[eq.
(3.321)]{book_Gradshteyn_Ryzhik} 
and subsequently, the result is written in (\ref{eq_ss_j_new_theorem}).

\subsection{The solution of $\mathcal{I}_{+}^{(\boldsymbol{k},l)}((\sigma^{(\boldsymbol{k})}_{\text{U}})^2)$ given by   \eqref{eq_SS_I1_appendix2}}
\label{appendix_Iplus_SS}

We manipulate \eqref{eq_SS_I1_appendix2} by first absorbing the two exponential functions into one and then completing the square on the argument of the exponential function following the steps taken in Appendix \ref{appendix_Jplus_SS} to achieve
\begin{align}
\label{eq_I1_lemma1_2}
\mathcal{I}_{+}^{(\boldsymbol{k},l)}((\sigma^{(\boldsymbol{k})}_{\text{U}})^2)
&=\frac{1}{2\lambda_{\text{E}}^{(l)}}\exp\Bigg(\frac{\rho-1}{\rho\lambda_{\text{E}}^{(l)}}-\frac{\mu^2_{\text{U}}\bar{\Gamma}_{\text{U}}}{2(\sigma^{(\boldsymbol{k})}_{\text{U}})^2\rho\lambda_{\text{E}}^{(l)}\Upsilon^{(\boldsymbol{k},l)}}\Bigg) \nn \\
&\times\int_{\frac{\mu^2_{\text{U}}\bar{\Gamma}_{\text{U}}-\lb(\rho-1\rb)}{\rho}}^{\infty}
\exp\Bigg(-\Upsilon^{(\boldsymbol{k},l)}\Bigg(\sqrt{\frac{\rho-1+\rho x}{\bar{\Gamma}_{\text{U}}}}\nn\\
&-\frac{\mu_{\textrm{U}}}{2(\sigma^{(\boldsymbol{k})}_{\text{U}})^2\Upsilon^{(\boldsymbol{k},l)}}\Bigg)^2\Bigg)dx.
\end{align}
On performing a change of variables $t=\sqrt{\frac{\rho-1+\rho x}{\bar{\Gamma}_{\text{U}}}}-\frac{\mu_{\textrm{U}}}{2(\sigma^{(\boldsymbol{k})}_{\text{U}})^2
\Upsilon^{(\boldsymbol{k},l)}}$ and after some further algebraic manipulations, we obtain
\begin{align}
\label{eq_def_I1_3}
&\mathcal{I}_{+}^{(\boldsymbol{k},l)}((\sigma^{(\boldsymbol{k})}_{\text{U}})^2)=\frac{ \bar{\Gamma}_{\text{U}}}{\rho\lambda_{\text{E}}^{(l)}}
\exp\Bigg(\frac{ \rho-1}{\rho\lambda_{\text{E}}^{(l)}}-\frac{ \mu^2_{\text{U}}\bar{\Gamma}_{\text{U}}}
{2(\sigma^{(\boldsymbol{k})}_{\text{U}})^2\rho\lambda_{\text{E}}^{(l)}\Upsilon^{(\boldsymbol{k},l)}}\Bigg)\nn\\&
\times\int_{\frac{\mu_{\textrm{U}}\bar{\Gamma}_{\text{U}}}{\rho\lambda_{\text{E}}^{(l)}\Upsilon^{(\boldsymbol{k},l)}}}
^{\infty}
\Big(t+\frac{\mu_{\textrm{U}}}{2(\sigma^{(\boldsymbol{k})}_{\text{U}})^2\Upsilon^{(\boldsymbol{k},l)}}\Big)
\exp\lb(-\Upsilon^{(\boldsymbol{k},l)} t^2\rb)
 dt.
\end{align}
The closed-form solution of (\ref{eq_def_I1_3}) is obtained with the help of \cite[eq.
(3.321)]{book_Gradshteyn_Ryzhik}
and finally, its result is written in \eqref{eq_SS_I_new}.

\subsection{The solution of  $\mathcal{J}_{+}^{(i,l)}(({\sigma^{(i)}_{\text{U}}})^2)$ given by (\ref{eq_Jplus_approx})}
\label{appendix_proof_J_plus}
The solution of  $\mathcal{J}_{+}^{(i,l)}(({\sigma^{(i)}_{\text{U}}})^2)$ is obtained following Appendix \ref{appendix_Jplus_SS}.  By first absorbing the two exponential functions in (\ref{eq_Jplus_approx}) into one, then completing the square on the argument of the exponential function, and finally applying a change of variables assuming $t=\sqrt{\frac{\rho x}{\bar{\Gamma}_{\text{U}}}}-\frac{\mu_{\textrm{U}}}{2({\sigma^{(i)}_{\text{U}}})^2\Upsilon^{(i,l)}}$,  where
$\Upsilon^{(i,l)}=\frac{1}{2({\sigma^{(i)}_{\text{U}}})^2}+\frac{ \bar{\Gamma}_{\text{U}}}{\rho\lambda_{\text{E}}^{(l)}}$, we achieve \begin{align}
\label{eq_I2_approx_lemma7}
&\mathcal{J}_{+}^{(i,l)}(({\sigma^{(i)}_{\text{U}}})^2)=\frac{ \bar{\Gamma}_{\text{U}}}{\rho\lambda_{\text{E}}^{(l)}}
\exp\Bigg(\frac{ -\mu^2_{\text{U}}\bar{\Gamma}_{\text{U}}}
{2({\sigma^{(i)}_{\text{U}}})^2\rho\lambda_{\text{E}}^{(l)}\Upsilon^{(i,l)}}\Bigg) \nn \\
&\times\int_{-\frac{\mu_{\textrm{U}}}{2({\sigma^{(i)}_{\text{U}}})^2\Upsilon^{(i,l)}}}
^{\infty}
\Big(t+\frac{\mu_{\textrm{U}}}{2({\sigma^{(i)}_{\text{U}}})^2\Upsilon^{(i,l)}}\Big)
\exp\lb(-\Upsilon^{(i,l)}t^2\rb)
 dt.
\end{align}
The closed-form solution of the above integral is obtained with the help of \cite[eq.(3.321)]{book_Gradshteyn_Ryzhik}. 
The result is provided in (\ref{eq_Jplus_closed}).

\subsection{The solution of  $\mathcal{I}_{+}^{(\boldsymbol{k},l)}((\sigma^{(\boldsymbol{k})}_{\text{U}})^2)$ given by (\ref{eq_Iplus_high_SNR}) }
\label{appendix_proof_I_plus}
The solution of $\mathcal{I}_{+}^{(\boldsymbol{k},l)}((\sigma^{(\boldsymbol{k})}_{\text{U}})^2)$ is
 obtained following the steps taken in Appendix \ref{appendix_proof_J_plus}. The first two steps implemented in (\ref{eq_Iplus_high_SNR}) are the same as in Appendix \ref{appendix_proof_J_plus}, then by applying a change of variables assuming $t=\sqrt{\frac{\rho x}{\bar{\Gamma}_{\text{U}}}}-\frac{\mu_{\textrm{U}}}{2({\sigma^{(\boldsymbol{k})}_{\text{U}}})^2
\Upsilon^{(\boldsymbol{k},l)}}$ and doing some further algebraic manipulations in (\ref{eq_Iplus_high_SNR}) we obtain
\begin{align}
\label{eq_SS_I_new_asym_lemma8}
&\mathcal{I}_{+}^{(i,l)}(({\sigma^{(\boldsymbol{k})}_{\text{U}}})^2)=\frac{\bar{\Gamma}_{\text{U}}}{\rho\lambda_{\text{E}}^{(l)}}
\exp\Bigg(\frac{ -\mu^2_{\text{U}}\bar{\Gamma}_{\text{U}}}
{2({\sigma^{(\boldsymbol{k})}_{\text{U}}})^2\rho\lambda_{\text{E}}^{(l)}\Upsilon^{(\boldsymbol{k},l)}}\Bigg) \nn \\
&\times\int_{\frac{\mu_{\textrm{U}}\bar{\Gamma}_{\text{U}}}{\rho\lambda_{\text{E}}^{(l)}\Upsilon^{(\boldsymbol{k},l)}}}
^{\infty}
\Big(t+\frac{\mu_{\textrm{U}}}{2({\sigma^{(\boldsymbol{k})}_{\text{U}}})^2\Upsilon^{(\boldsymbol{k},l)}}\Big)
\exp\lb(-\Upsilon^{(\boldsymbol{k},l)} t^2\rb)
 dt.
\end{align}
The solution of (\ref{eq_SS_I_new_asym_lemma8}) in closed form is obtained with the help of \cite[eq.
(3.321.1)]{book_Gradshteyn_Ryzhik} and is provided in  (\ref{eq_Iplus_high_SNR_closed}). 

\bibliographystyle{IEEEtran}
\bibliography{IEEEabrv,RIS}
\end{document}